\newtheorem{proposition}{Proposition}[section]
\newtheorem{example}{Example}[section]
\newtheorem{remark}{Remark}[section]
\newenvironment{proof}[1][Proof]{\text{#1.} }{\ \rule{0.5em}{0.5em}}
\def\Q{{\mathbb Q}}        
\def\R{{\mathbb R}}        
\def\P{{\mathbb P}}        
\def\E{{\mathbb E}}        
\def\1{{\mathbf 1}}        
\def\F{{\mathcal F}}        
\DeclareMathOperator{\Tr}{Tr}
\numberwithin{theorem}{subsection}
\numberwithin{equation}{section}
\newcommand{\white}[1]{\textcolor{white}{#1}}
\begin{document}
\title{Dynamic Index Tracking and Risk Exposure Control Using Derivatives\thanks{The authors acknowledge the support from  KCG Holdings Inc., and the helpful remarks from the participants of the 2015 INFORMS Annual Meeting,  2016 SIAM Conference on Financial Math \& Engineering, Thalesians Quantitative Finance Seminar, and Quant Summit USA  2016.}}
\author{Tim Leung\thanks{Applied Mathematics Department,  University of Washington, Seattle WA 98195. Email: {timleung@uw.edu}. {Corresponding author}. } \and Brian Ward\thanks{Industrial Engineering \& Operations Research (IEOR) Department, Columbia University, New York, NY 10027. Email: {bmw2150@columbia.edu}.} }

\maketitle

\abstract{We develop a methodology  for index tracking and risk exposure control using financial derivatives. Under a continuous-time diffusion framework for price evolution, we present a pathwise approach to construct dynamic portfolios of derivatives in order to gain exposure to an index and/or market factors that may be not directly  tradable. Among our results, we establish a general tracking condition that  relates the portfolio drift to the desired exposure coefficients under any given model.  We also derive a slippage process that reveals how the portfolio return deviates from the targeted return. In our multi-factor setting, the portfolio's realized slippage depends not only on the realized variance of the index, but also the realized covariance among the index and factors. We implement our  trading strategies under a number of models, and compare the tracking strategies and performances when using different derivatives, such as futures and options. }
\medskip

{\bf Keywords:} slippage,  index tracking, exposure control, realized covariance,  derivatives trading

{\bf JEL Classification:} G11, G13

{\bf Mathematics Subject Classification (2010):} 60G99, 91G20
  
\newpage

\section{Introduction}A common challenge faced by many institutional and  retail investors is to effectively control risk exposure to various market factors.  There is a great variety of indices designed to provide  different types of exposures across  sectors and asset classes, including equities, fixed income, commodities, currencies, credit risk, and more. Some of these indices  can be difficult or impossible to trade directly, but investors can trade the associated financial derivatives if they are available in the market.  For example, the CBOE Volatility Index (VIX),  often referred to as the  {fear index},  is  not directly tradable, but investors can gain exposure to the index and potentially hedge against market turmoil  by trading futures and options written on VIX.

To illustrate the benefits of having exposure to VIX, consider the 2011 U.S. credit rating downgrade by Standard and Poor's. News of a negative outlook by S\&P of the U.S. credit rating broke on April 18th, 2011.\footnote{See New York Times article: \url{http://www.nytimes.com/2011/04/19/business/19markets.html}.} As displayed in  Figure \ref{fig:VIX_SPY}(a), a  portfolio   holding only the SPDR S\&P 500 ETF (SPY) would go on to lose   about 10\%  with a volatile trajectory for a few months past the  official downgrade on August 5th, 2011.\footnote{See: \url{http://www.nytimes.com/2011/08/06/business/us-debt-downgraded-by-sp.html}} In contrast, a hypothetical portfolio with a mix of SPY (90\%) and VIX (10\%)  would  be stable through the downgrade and end up with a positive return. Figure \ref{fig:VIX_SPY}(b) shows the same pair of portfolios over the year 2014. Both earned roughly the same 15\% return though SPY  alone was  visibly more volatile than the  portfolio with SPY and VIX. The large drawdowns (for example on October 15th, 2014) were met by rises in VIX, creating a  stabilizing effect on the portfolio's value.

\begin{figure}[h]
	\begin{centering}
		\subfigure[April 2011 - Sept. 2011]{\includegraphics[trim={1.2cm 0.7cm 1.2cm 0.5cm},clip,width=3in]{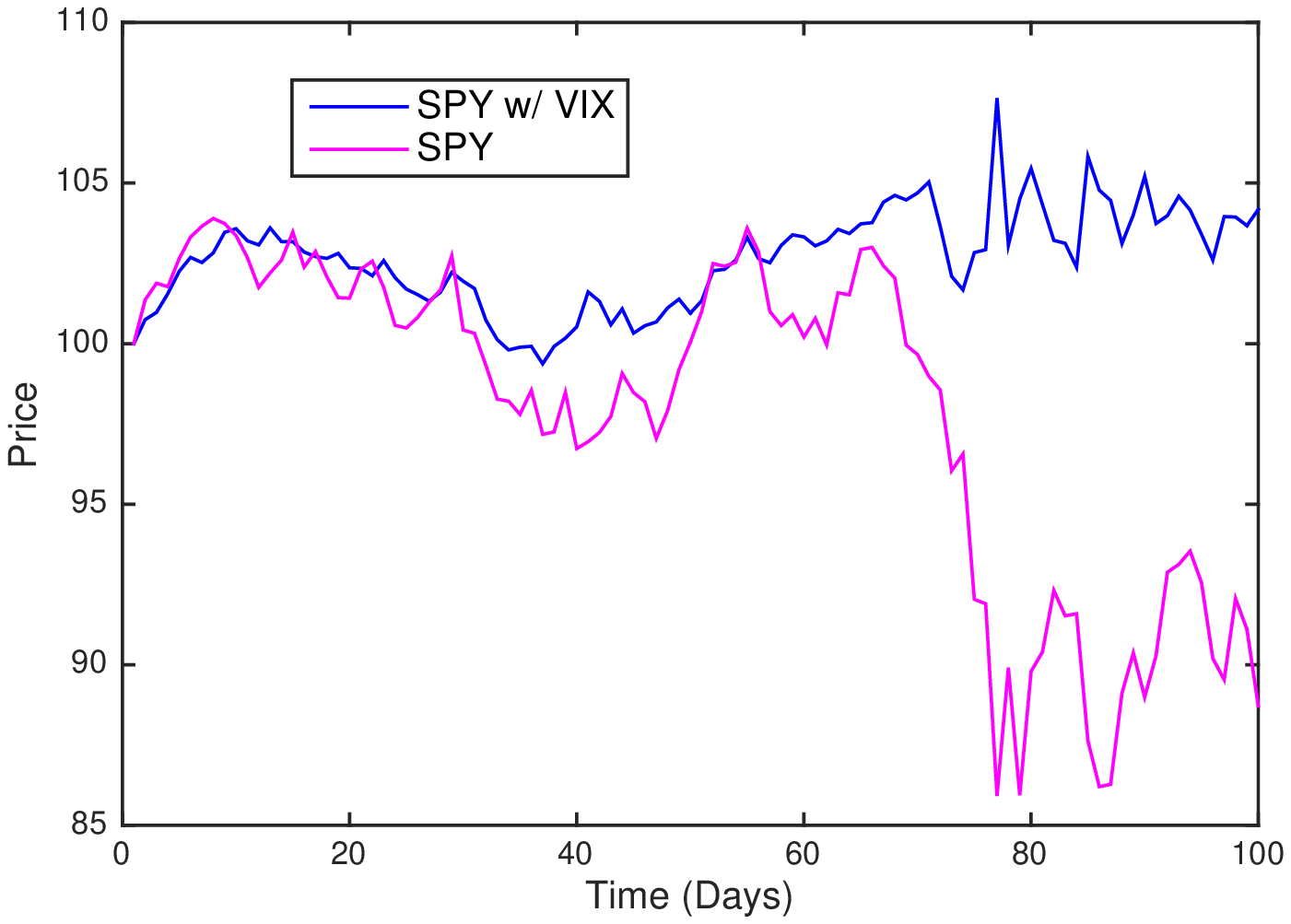}}
		\subfigure[Jan. 2014 - Dec. 2014]{\includegraphics[trim={1.2cm 0.7cm 1.2cm 0.5cm},clip,width=3in]{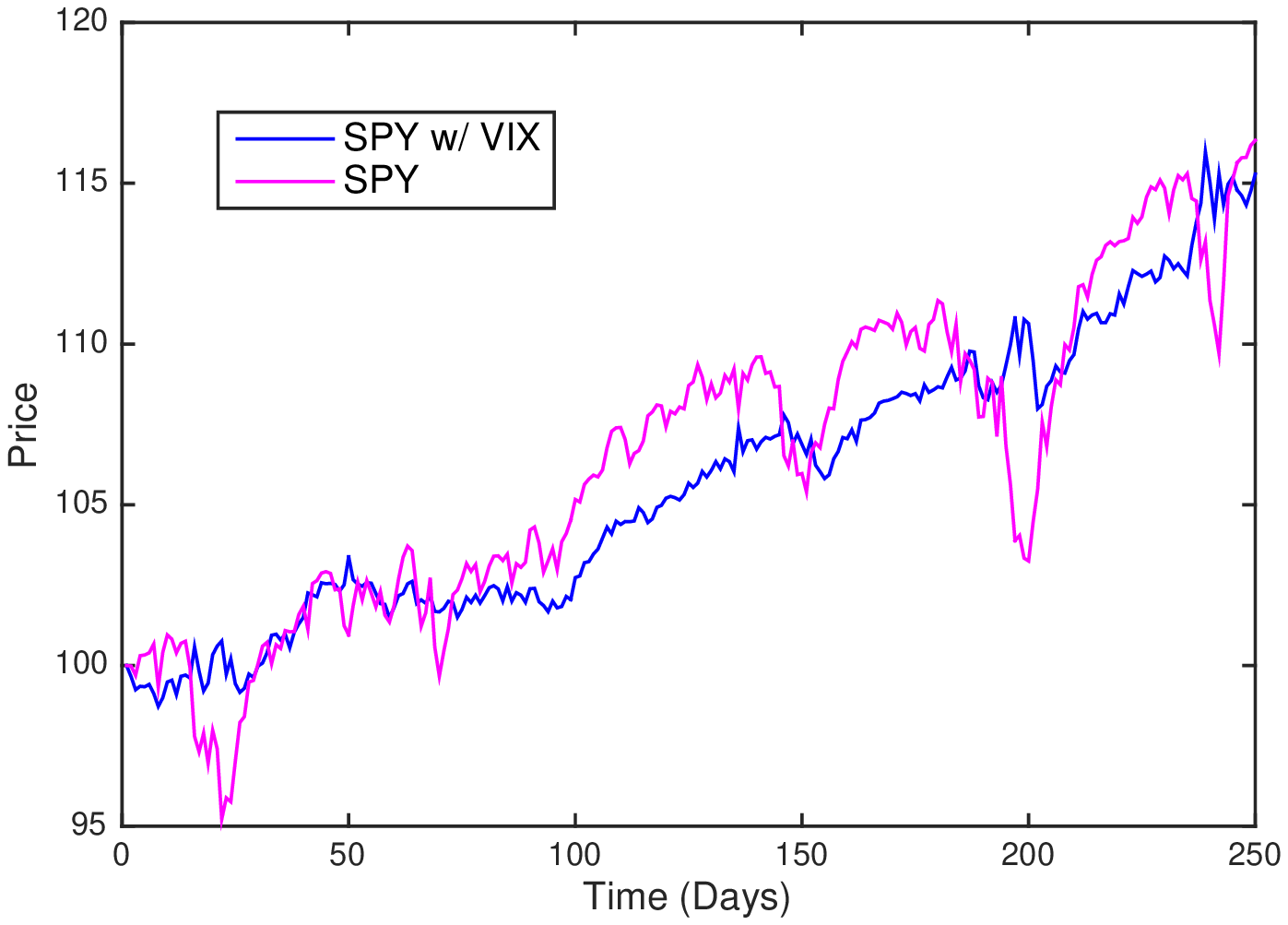}}
		\caption{\small{Hedged portfolios of SPY with 10\%   of wealth invested in VIX during  (a) April 2011 -- Sept. 2011, and (b) Jan. 2014 -- Dec. 2014.}}
		\label{fig:VIX_SPY}
	\end{centering}
\end{figure}

This example motivates the  investigation of    trading strategies that directly track VIX and other indices, or achieve  any pre-specified exposure with respect to an index or market factor.  Many ETFs or ETNs are advertised to provide exposure to an index by maintaining a portfolio of securities   related to the index, such as futures, options, and swaps.  However, some of ETFs or ETNs often miss their stated targets, and some tend to significantly underperform over time relative to the targets. One example is the Barclay's iPath S\&P 500 VIX Short-Term Futures ETN (VXX), which is also the most popular VIX exchange-traded product.\footnote{As measured by an average daily volume in excess of $57$ million shares as of May, 2017. \url{http://etfdb.com/etfdb-category/volatility/}} The failure of VXX to track VIX is well documented (e.g. \cite{HussonMcCann},  \cite{DengMcCannWang},  \cite{alexander2013volatility}, and \cite{whaleyVolCost}). In fact, most of these  ETFs or ETNs follow a static strategy or time-deterministic allocation that does not adapt to the rapidly  changing market. The problem of tracking and risk exposure control is relevant in all asset classes, and the use of derivatives is also common in other markets.  For example, many investors seek exposure to gold  to hedge against market turmoil. However, direct investment in gold bullion is difficult due to storage cost. In order to gain   exposure, an investor may select among  a number of gold derivatives, and ETFs  as well as their leveraged counterparts. An analysis on  the tracking properties of portfolios of gold futures and ETFs can be found in  \cite{LeungWard}. For an empirical study on  the tracking errors of a large collection of commodity leveraged and non-leveraged ETFs, we refer to  \cite{guoleung} .

In this paper, we discuss a general methodology  for index tracking and risk exposure control using derivatives.    In Section \ref{sec:cts_time_framework}, we describe the market in  a continuous-time diffusion framework,  and derive a condition (see Proposition \ref{prop:rep_cond}) that links the exposures attainable by a derivatives portfolio. In the special case that the exposures are constant over time, the portfolio value admits an explicit expression in terms of the reference index (see Proposition \ref{prop:pf_dynamics}).  In particular, we   quantify the divergence of portfolio return  from the target returns  of the index and its factors via the \emph{slippage process}.  With exposure to the index and multiple stochastic factors,  the slippage process  is a function of  not only the  realized variance of the underlying factors, but also  the  \emph{realized covariance} among  the index and   factors. The slippage process derived herein reveals the potential effect of portfolio value erosion arising from the interactions among all sources of randomness in any given model. Moreover,  it  can also explain as a special case the well-known volatility decay effect in leverage ETFs (see, among others, \cite{AZ}, \cite{jarrowLETFs},   and   \cite{LeungSantoliBook}). 

Index tracking or risk exposure control  can be perceived as an inverse problem to   dynamic hedging of derivatives. In the traditional hedging problem, the goal is to trade the underlying assets so as to replicate the price evolution of the derivative in question, and thus, the tradability of the underlying is of crucial importance. In our proposed paradigm, the index and   stochastic factors may not be directly traded, but there exist traded derivatives written on them. We use derivatives to track or, more generally, control risk exposure with respect to   the index return in a pathwise manner. Consequently, we can study the path properties resulting from  various portfolios of derivatives, and quantify the portfolio's divergence, if any,   from a pre-specified  benchmark.  Our methodology also allows the investor to achieve leveraged or non-leveraged exposures with respect to the associated factors in the model.

 The problem of index tracking has been well studied   from other different perspectives. Utilizing  a small subset of available stocks under the geometric Brownian Motion (GBM) model, \cite{YaoZhangZhou} solve a stochastic linear quadratic control problem to determine the optimal asset allocation  that best tracks a benchmark's return.  \cite{PrimbsSung} consider a variation of this problem by including  probabilistic portfolio constraints, e.g. short sale restrictions, and find  the optimal allocation via semi-definite programming. A similar problem is studied by \cite{MVOindexTracking}, who optimizes a combination of mean and variance of the portfolio's tracking error with respect to a benchmark index. Our work differs from this line of work in that our approach involves trading derivatives and our analysis is pathwise rather than statistical. We apply the tracking methodology developed here to two important models for an equity index from the derivatives pricing literature in Sections \ref{sec:black_scholes_track} (Black-Scholes) and \ref{sec:heston} (Heston). 

Motivated still further by VIX, we explore the applications and implications of our methodology under  some  models for VIX. In particular, our methodology and examples   shed light on the connection between the mean-reverting behavior of VIX and  the implications to the pricing of VIX derivatives and tracking this index and its associated factors. We consider the tracking and risk exposure control problems under the  Cox-Ingersoll-Ross (CIR) model (see \cite{CIR85,Grunbichler1996985}), as well as the Concatenated Square Root (CSQR) model (see \cite{futuresVIX}). Among our new findings, we derive the trading strategies using options or futures to track the VIX  or achieve any exposure to the index and/or its factor (see Sections \ref{sec:one_fact_CIR} and \ref{sec:two_fact_CIR}). In addition, we illustrate the superior tracking performance of our proposed portfolio as compared to  the volatility ETN, VXX, in Section \ref{sec:VXXcomp}. This positive result can be contributed to the salient feature that our portfolio utilizes a pathwise-adaptive strategy, as opposed to a time-deterministic one used by VXX and other ETNs.  Furthermore, our strategy  is  completely explicit and can be implemented instantly. Although we have chosen VIX as our main example, our analysis applies to other mean-reverting price processes or market factors.

\newpage
\section{Continuous-Time Tracking Problem}\label{sec:cts_time_framework}
We present our tracking methodology   in  a continuous-time multi-factor diffusion framework, which  encapsulates a number of different models for financial indices and market factors. Derivatives portfolios are constructed to achieve  a pre-specified  exposure, and  their price dynamics are examined.  In particular, we compare the strategies using only futures to those using other derivatives, such as options,  for tracking and exposure control.

\subsection{Price Dynamics}\label{sec:price_dynamics}
In the background,  we fix a probability space $(\Omega, \F, \Q)$, where  $\Q$ is the risk-neutral probability  measure inferred from market derivatives prices. Throughout, we assume a   constant risk-free interest rate $r\ge 0$. Consider an index $S$, along with  $d\ge0$ exogenous observable stochastic factors $\{Y^{(i)}\,;\, i = 1, \ldots, d\}$, which satisfy the following system of stochastic differential equations (SDEs):
\begin{align}
dS_t&=\widetilde{\gamma}_t^{(0)}dt+\sigma_t^{(0,0)}dB_{t}^{\Q,0}+...+\sigma_t^{(0,d)}dB_{t}^{\Q,d}, \label{eq:index_sde} \\
dY_t^{(i)}&=\widetilde{\gamma}_t^{(i)}dt+\sigma_t^{(i,0)}dB_{t}^{\Q,0}+...+\sigma_t^{(i,d)}dB_{t}^{\Q,d}, \qquad i=1,...,d. \label{eq:factor_sde} 
\end{align} 
Here, $B_t^\Q:=(B_{t}^{\Q,0},...,B_{t}^{\Q,d})$, $t\ge 0$,  is  a $(d+1)-$dimensional standard Brownian Motion (SBM) under $\Q$.  We assume that $S_t$ and $Y_t^{(i)}$, $i=1,...,d$, are all strictly positive processes, and consider a Markovian framework whereby the coefficients are   functions of $t$, $S_t$, and $Y_t^{(i)}$, $i = 1, \ldots, d$, defined by 
\begin{equation*}
\begin{aligned}
\widetilde{\gamma}_t^{(i)}&:=\widetilde{\gamma}^{(i)}(t,S_t,Y_t^{(1)},...,Y_t^{(d)}),\quad \, i=0,...,d,\\
\sigma_t^{(i,j)}&:={\sigma}^{(i,j)}(t,S_t,Y_t^{(1)},...,Y_t^{(d)}),\quad \, i=0,...,d,\,j=0,...,d.\\
\end{aligned}
\end{equation*}
Equivalently, we can write the SDEs  more compactly in matrix form as
\begin{align*}
dM_t=\widetilde{\gamma}_t dt + \Sigma_t dB_t^\Q,
\end{align*}
where   $M_t:=(S_t,Y_t^{(1)},...,Y_t^{(d)})$ is a vector. The   drift vector is $\widetilde{\gamma}_t:=(\widetilde{\gamma}_t^{(0)},\widetilde{\gamma}_t^{(1)},...,\widetilde{\gamma}_t^{(d)})$, and the $(d+1)\times(d+1)$  volatility matrix $\Sigma_t$, has the entry $(\Sigma_t)_{i,j}=\sigma_t^{(i,j)}$ for each $i,j$. 
\begin{remark}
	The risk-neutral pricing measure, $\Q$ is an  equivalent martingale measure with respect to  the historical probability measure $\P$. The associated numeraire is  the cash account,  so all traded security prices  are discounted martingales. Under the original measure, $\P$, the market evolves according to 
	\begin{equation*}
	\begin{aligned}
	dM_t={\gamma}_t dt + \Sigma_t dB_t^\P,
	\end{aligned}
	\end{equation*}
	where $B_t^\P$ is a $(d+1)-$dimensional SBM under $\P.$ The measures $\P$ and $\Q$ are connected by the market price of risk vector $\lambda_t:=\Sigma_t^{-1}(\gamma_t-\widetilde{\gamma}_t)$. That is,
 	\begin{align}
	dB_t^\P=dB_t^\Q-\lambda_t dt. \label{dbp}
	\end{align}
 While our framework includes both complete and incomplete market models, we always assume that a risk-neutral measure has been chosen a priori and satisfies \eqref{dbp}. Since all our results and strategies are derived  and stated   \emph{pathwise} (see Propositions \ref{prop:rep_cond} and \ref{prop:pf_dynamics}), there is no need to revert from measure $\Q$ back to $\P$.
\end{remark}

Let us denote by  $c_t^{(k)} :=c^{(k)}(t,S_t,Y_t^{(1)},...,Y_t^{(d)})$, for  $k=1,...,N$, $t\in [0,T_k]$, the price processes  of $N$ European-style derivatives written on $S$, with respective terminal payoff functions $h^{(k)}(s,y_1,...,y_d)$ to be realized at time $T_k$.\footnote{We allow the payoff to depend on the factors themselves. For example, consider a spread option on an equity index $S$ and another correlated index $Y$. This amounts to setting $d=1$, and  viewing $Y$ as an index, and specifying  the option payoff     $h(s,y)=(s-y)^+$.}   At time $t\le T_k$, the no-arbitrage price of the $k$th derivative is given by 
\begin{equation}\label{eq:pricing_equation}
\begin{aligned} c^{(k)}(t,S_t,Y_t^{(1)},...,Y_t^{(d)})=\E^\Q\left[e^{-r(T_k-t)}h^{(k)}(S_{T_k},Y_{T_k}^{(1)},...,Y_{T_k}^{(d)})\big|S_t,Y_t^{(1)},...,Y_t^{(d)}\right].
\end{aligned}
\end{equation}The infinitesimal relative return of the $k$th derivative can be expressed as \begin{equation}\label{ckreturnSDE}
\frac{dc_t^{(k)}}{c_t^{(k)}}=C_t^{(k)}dt+D_t^{(k)}\frac{dS_t}{S_t}+E_{t}^{(k,1)}\frac{dY_t^{(1)}}{Y_t^{(1)}}+...+E_{t}^{(k,d)}\frac{dY_t^{(d)}}{Y_t^{(d)}},
\end{equation}
where we have defined
\begin{align}
C_t^{(k)}&:=r-\frac{\widetilde{\gamma}_t^{(0)}}{c_t^{(k)}}\frac{\partial c^{(k)}}{\partial S}-\frac{\widetilde{\gamma}_t^{(1)}}{c_t^{(k)}}\frac{\partial c^{(k)}}{\partial Y^{(1)}}-...-\frac{\widetilde{\gamma}_t^{(d)}}{c_t^{(k)}}\frac{\partial c^{(k)}}{\partial Y^{(d)}},\label{eq:nonlin_elastDef}\\
D_t^{(k)}&:=\frac{S_t}{c_t^{(k)}}\frac{\partial c^{(k)}}{\partial S},\label{eq:nonlin_elastDef2}\\
E_{t}^{(k,i)}&:=\frac{Y_t^{(i)}}{c_t^{(k)}}\frac{\partial c^{(k)}}{\partial Y^{(i)}},\quad  \, i=1,...,d. \label{eq:nonlin_elastDef3}
\end{align}
The first coefficient, $C_t^{(k)}$, is the drift of  the $k$th derivative, $D_t^{(k)}$ is the price elasticity of the $k$th derivative   with respect to the underlying index, and   $E_{t}^{(k,i)}$ is the price elasticity of the $k$th derivative   with respect to the $i$th factor. The full derivation of \eqref{ckreturnSDE} can be found in Appendix \ref{App1}.

To track an index, the trader  seeks to construct a portfolio and  precisely set  the portfolio's drift and exposure coefficients with respect to the index and its driving factors. As we will discuss next, these portfolio features  will be expressed as a linear combination of the above price elasticities. Therefore, to attain the desired exposures, the strategy is derived by solving a linear system over time. 

\subsection{Tracking Portfolio Dynamics}\label{sec:track_pf_dyn}
Fix a trading horizon $[0,T]$, with  $T\le T_k$, for all $k=1,...,N$. We  construct a self-financing portfolio,  $(X_t)_{0\le t \le T}$, utilizing $N$   derivatives with prices given by   \eqref{eq:pricing_equation}. The portfolio strategy is denoted by  the vector $w_t := (w_t^{(1)}, \ldots, w_t^{(N)})$, for $0 \le t \le T$  so that  $X_tw_t^{(k)}$ is the cash amount   invested in the $k$th derivative at time $t$ for $k = 1, \ldots, N$. Therefore, the amount  $X_t-\sum_{k=1}^{N}X_tw_t^{(k)}$ is invested at the risk-free rate $r$ at time $t$. Given such strategies, the dynamics of $X$ are 
\begin{equation}\label{eq:portfolio_val}
\begin{aligned}
\frac{dX_t}{X_t}&=\sum_{k=1}^{N}w_t^{(k)}\frac{dc_t^{(k)}}{c_t^{(k)}}+r\left(1-\sum_{k=1}^{N}w_t^{(k)}\right)dt\\
&=\left(r+\sum_{k=1}^{N}w_t^{(k)}\left(C_t^{(k)}-r\right)\right)dt+\left(\sum_{k=1}^{N}w_t^{(k)}D_t^{(k)}\right)\frac{dS_t}{S_t}+\sum_{i=1}^{d}\left(\sum_{k=1}^{N}w_t^{(k)}E_{t}^{(k,i)}\right)\frac{dY_t^{(i)}}{Y_t^{(i)}}.
\end{aligned}
\end{equation}
The three terms in \eqref{eq:portfolio_val} represent respectively the portfolio drift, exposure to the index, and exposures to the $d$  factors. 
Suppose that the investor  has chosen (i) a drift  process $(\alpha_t)_{0\le t \le T}$, (ii)  dynamic exposure coefficient $(\beta_t)_{0\le t \le T}$ with respect to the returns of $S$, and (iii)  dynamic exposure coefficients $(\eta^{(1)}_t)_{0\le t \le T},...,(\eta^{(d)}_t)_{0\le t \le T}$ with respect to the $d$ factors. Such coefficients must be adapted to the filtration in order to be attainable, but of course they may simply be constant. Then, in order to match the coefficients as desired, we must choose the strategies so as to solve the following linear system:
\begin{equation}\label{eq:linear_sys}
\left( \begin{array}{c}
\alpha_t-r \\
\beta_t\\
\eta_t^{(1)}\\
...\\
\eta_t^{(d)} \end{array} \right)= \left( \begin{array}{ccc}
C_t^{(1)}-r & ... & C_t^{(N)}-r \\
D_t^{(1)} & ... & D_t^{(N)}\\
E_{t}^{(1,1)} & ... & E_{t,1}^{(N,d)}\\
& ... &\\
E_{t}^{(1,d)} & ... & E_{t}^{(N,d)}\\ \end{array}\right)
\left( \begin{array}{c}
w_t^{(1)} \\
...\\
w_t^{(N)}\end{array} \right).
\end{equation}
However, we observe  from \eqref{eq:nonlin_elastDef}-\eqref{eq:nonlin_elastDef3}  that $C_t^{(k)}$, $D_t^{(k)}$ and $E_{t}^{(k,i)}$ satisfy 
\begin{equation}
\begin{aligned}
C_t^{(k)}-r+\frac{\widetilde{\gamma}_t^{(0)}}{S_t}D_t^{(k)}+\frac{\widetilde{\gamma}_t^{(1)}}{Y_t^{(1)}}E_{t}^{(k,1)}+...+\frac{\widetilde{\gamma}_t^{(d)}}{Y_t^{(d)}}E_{t}^{(k,d)}=0,
\end{aligned}
\end{equation}
for each $k$. Therefore, the rows of the coefficient matrix on the right hand side of   \eqref{eq:linear_sys} are  \emph{linearly dependent}. We arrive at the following proposition:
\begin{proposition}\label{prop:rep_cond}A necessary condition for the derivatives portfolio in \eqref{eq:portfolio_val} to have drift $\alpha_t$, and exposure coefficients $(\beta_t$, $\eta^{(1)}_t,$ $...,\eta^{(d)}_t)$ with respect   to $(S_t$, $Y^{(1)}_t,...,Y^{(d)}_t)$ defined in \eqref{eq:index_sde} and \eqref{eq:factor_sde} for $0\le t\le T$ is
\begin{equation}\label{eq:track_cond}
\begin{aligned}
\alpha_t-r+\frac{\widetilde{\gamma}_t^{(0)}}{S_t}\beta_t+\frac{\widetilde{\gamma}_t^{(1)}}{Y_t^{(1)}}\eta_t^{(1)}+...+\frac{\widetilde{\gamma}_t^{(d)}}{Y_t^{(d)}}\eta_t^{(d)}=0,
\end{aligned}
\end{equation}
for all $t\in[0,T]$.
\end{proposition}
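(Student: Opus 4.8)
The plan is to exploit the linear dependence among the rows of the coefficient matrix in \eqref{eq:linear_sys}, which is exactly the displayed identity immediately preceding the statement: for each $k = 1, \ldots, N$ and each $t$,
\begin{equation*}
C_t^{(k)} - r + \frac{\widetilde{\gamma}_t^{(0)}}{S_t}D_t^{(k)} + \sum_{i=1}^d \frac{\widetilde{\gamma}_t^{(i)}}{Y_t^{(i)}}E_t^{(k,i)} = 0.
\end{equation*}
Put differently, the fixed vector $\left(1, \widetilde{\gamma}_t^{(0)}/S_t, \widetilde{\gamma}_t^{(1)}/Y_t^{(1)}, \ldots, \widetilde{\gamma}_t^{(d)}/Y_t^{(d)}\right)$ lies in the left null space of that matrix for every $t \in [0,T]$.

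The proof itself is a single weighted summation. I would first assume that the weights $w_t = (w_t^{(1)}, \ldots, w_t^{(N)})$ attain the prescribed drift and exposures, that is, they solve \eqref{eq:linear_sys}; this is precisely what it means for the portfolio in \eqref{eq:portfolio_val} to have drift $\alpha_t$ and exposures $(\beta_t, \eta_t^{(1)}, \ldots, \eta_t^{(d)})$. Multiplying the identity above by $w_t^{(k)}$ and summing over $k$, the coefficients $\widetilde{\gamma}_t^{(0)}/S_t$ and $\widetilde{\gamma}_t^{(i)}/Y_t^{(i)}$ factor out of the $k$-sums since they do not depend on $k$, leaving the interior sums $\sum_k w_t^{(k)}(C_t^{(k)} - r)$, $\sum_k w_t^{(k)} D_t^{(k)}$, and $\sum_k w_t^{(k)} E_t^{(k,i)}$. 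These coincide with the rows of \eqref{eq:linear_sys}, so the matching conditions replace them by $\alpha_t - r$, $\beta_t$, and $\eta_t^{(i)}$. Collecting terms yields \eqref{eq:track_cond}, and since the argument holds for each $t \in [0,T]$, so does the conclusion.

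A more conceptual phrasing of the same fact is that \eqref{eq:linear_sys} admits a solution $w_t$ only if the target $(\alpha_t - r, \beta_t, \eta_t^{(1)}, \ldots, \eta_t^{(d)})$ lies in the column space of the coefficient matrix; since the column space is the orthogonal complement of the left null space, the target must be orthogonal to the left null vector identified above, which is exactly \eqref{eq:track_cond}. I do not anticipate any genuine obstacle, since the argument is a one-step linear combination. The only care needed is bookkeeping---pairing each $\widetilde{\gamma}_t^{(i)}/Y_t^{(i)}$ with $\eta_t^{(i)}$ and $\widetilde{\gamma}_t^{(0)}/S_t$ with $\beta_t$---which is immediate from the layout of \eqref{eq:linear_sys} and the strict positivity of $S_t$ and $Y_t^{(i)}$ assumed in Section \ref{sec:price_dynamics}.
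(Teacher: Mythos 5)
Your proof is correct and matches the paper's own argument: the paper likewise derives the per-derivative identity $C_t^{(k)}-r+\frac{\widetilde{\gamma}_t^{(0)}}{S_t}D_t^{(k)}+\sum_{i=1}^d\frac{\widetilde{\gamma}_t^{(i)}}{Y_t^{(i)}}E_t^{(k,i)}=0$, observes that the rows of the coefficient matrix in \eqref{eq:linear_sys} are therefore linearly dependent, and concludes that consistency of the system forces \eqref{eq:track_cond}. Your explicit weighted summation over $k$ is just a spelled-out version of that same consistency argument, so nothing differs in substance.
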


We call condition \eqref{eq:track_cond} the \textbf{tracking condition}. For the general diffusion framework above,  the  left hand side of   \eqref{eq:track_cond} is    stochastic over time. It is possible to exactly control the exposures as desired almost surely if \eqref{eq:track_cond} holds pathwise.  In some  special cases,    the tracking condition \eqref{eq:track_cond} becomes a deterministic equation relating the (achievable) exposure coefficients of the factors and the index. However, one cannot expect this in general.

Instead, suppose  the dynamic exposure coefficients $(\beta_t, \eta^{(1)}_t,...,\eta^{(d)}_t)$ are pre-specified  by the trader a priori, then the tracking condition \eqref{eq:track_cond} indicates that the associated portfolio must be subject to a stochastic portfolio drift
\begin{equation}
\begin{aligned}
\alpha_t=r-\frac{\widetilde{\gamma}_t^{(0)}}{S_t}\beta_t-\frac{\widetilde{\gamma}_t^{(1)}}{Y_t^{(1)}}\eta_t^{(1)}-...-\frac{\widetilde{\gamma}_t^{(d)}}{Y_t^{(d)}}\eta_t^{(d)}.
\end{aligned}
\end{equation}
In other words,  the investor  cannot freely  control the target drift and all market exposures    simultaneously  in this general diffusion framework.  Indeed, we take this point of view in our examples to be discussed  in the following sections, and investigate  the impact of controlled exposures on the  portfolio dynamics. More generally, the tracking condition tells us that amongst the $d+2$ sources of evolution for the portfolio (drift, and $d+1$ market variables), we can only select coefficients for $d+1$ of them (unless the above condition happens to hold for that particular model.)

The tracking condition \eqref{eq:track_cond} implies that the linear system for the tracking strategies has (at least) one redundant equation. We effectively have a $(d\!+\!1)$-by-$N$ system, so  using $N>d+1$ derivatives is unnecessary and yields infinitely many portfolios having the same desired path properties. However, using exactly $N=d+1$ derivatives leads to a unique  strategy   and gives  the desired path properties. 
\begin{remark}
We have  explained one source of redundancy that exists in \emph{any} diffusion model. However,   other potential redundancies can arise depending on the derivative types and their dependencies on the index and   factors. In fact, 	it is possible that the chosen set of derivatives does not allow the resulting system  to have a unique solution. To see this, we   provide an example   in  the Heston Model in Section \ref{sec:heston} (see Example \ref{ex:heston_futures}). This is remedied by including   a  new derivative type in the portfolio (see  Example \ref{ex:heston_futures_plusVol}). 
\end{remark}

Given that the index and factor exposure coefficients are constant and that the tracking condition holds, we can derive the portfolio dynamics explicitly and illustrate  a stochastic  divergence between the portfolio   return and   targeted return. 

\begin{proposition}\label{prop:pf_dynamics}
Given constant exposure coefficients, i.e. $\beta_t=\beta$, $\eta_t^{(1)}=\eta_1$, $\ldots, \eta_t^{(d)}=\eta_d$, $\forall t\in[0,T]$, and  a strategy  $(w_t^{(1)},\ldots, w_t^{(N)})_{0 \le t \le T}$ that solves system \eqref{eq:linear_sys} , then the portfolio value defined in  \eqref{eq:portfolio_val}  admits the expression
\begin{equation}\label{eq:prop_pf_val|}
\begin{aligned}
\frac{X_u}{X_t}=\left(\frac{S_u}{S_t}\right)^\beta\prod_{i=1}^{d}\left(\frac{Y_u^{(i)}}{Y_t^{(i)}}\right)^{\eta_i}e^{\int_t^u Z_v dv},
\end{aligned}
\end{equation}
for all $0\le t \le u \le T$, where $S$, and $(Y^{(1)},...,Y^{(d)})$ satisfy    \eqref{eq:index_sde} and \eqref{eq:factor_sde}, and 
 \begin{align}
Z_t&:= \alpha_t +\frac{1}{2}\beta(1-\beta)\sum_{j=0}^d\left(\frac{\sigma_t^{(0,j)}}{S_t}\right)^2  +\frac{1}{2}\sum_{i=1}^d\sum_{j=0}^d\eta_i(1-\eta_i)\left(\frac{\sigma_t^{(i,j)}}{Y_t^{(i)}}\right)^2  \label{zzzz}\\
&-\beta\sum_{i=1}^{d}\sum_{j=0}^{d}\eta_i\frac{\sigma_t^{(i,j)}}{Y_t^{(i)}}\frac{\sigma_t^{(0,j)}}{S_t}-\sum_{i=1}^{d}\sum_{l=i+1}^{d}\sum_{j=0}^{d}\eta_i\eta_l\frac{\sigma_t^{(i,j)}}{Y_t^{(i)}}\frac{\sigma_t^{(l,j)}}{Y_t^{(l)}}.\,\notag
\end{align}
 \end{proposition}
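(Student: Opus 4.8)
The plan is to verify \eqref{eq:prop_pf_val|} by taking logarithms and matching stochastic differentials in the upper limit $u$. Fix $t$ and regard the right-hand side of \eqref{eq:prop_pf_val|} as a process in $u$; since both $X_u/X_t$ and that expression equal $1$ at $u=t$, it suffices to show that $\log(X_u/X_t)$ and the logarithm of the right-hand side, namely $\beta\log(S_u/S_t)+\sum_{i=1}^d\eta_i\log(Y_u^{(i)}/Y_t^{(i)})+\int_t^u Z_v\,dv$, have identical differentials in $u$.

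First I would invoke the hypothesis that $(w_t^{(k)})$ solves the linear system \eqref{eq:linear_sys}: substituting $\sum_k w_t^{(k)}(C_t^{(k)}-r)=\alpha_t-r$, $\sum_k w_t^{(k)}D_t^{(k)}=\beta$, and $\sum_k w_t^{(k)}E_t^{(k,i)}=\eta_i$ into \eqref{eq:portfolio_val} collapses the portfolio dynamics to
\[
\frac{dX_t}{X_t}=\alpha_t\,dt+\beta\frac{dS_t}{S_t}+\sum_{i=1}^d\eta_i\frac{dY_t^{(i)}}{Y_t^{(i)}}.
\]
Inserting the diffusion parts of \eqref{eq:index_sde}--\eqref{eq:factor_sde}, the coefficient of $dB_t^{\Q,j}$ in $dX_t/X_t$ is $\Pi_t^{(j)}:=\beta\,\sigma_t^{(0,j)}/S_t+\sum_{i=1}^d\eta_i\,\sigma_t^{(i,j)}/Y_t^{(i)}$, and because the $B^{\Q,j}$ are independent standard Brownian motions, $d\langle X\rangle_t/X_t^2=\sum_{j=0}^d(\Pi_t^{(j)})^2\,dt$. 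Itô's formula then gives
\[
d\log X_t=\Bigl(\alpha_t-\tfrac12\textstyle\sum_{j=0}^d(\Pi_t^{(j)})^2\Bigr)dt+\beta\frac{dS_t}{S_t}+\sum_{i=1}^d\eta_i\frac{dY_t^{(i)}}{Y_t^{(i)}}.
\]

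On the other side, Itô applied to the logarithm of each coordinate yields $d\log S_v=dS_v/S_v-\tfrac12\sum_j(\sigma_v^{(0,j)}/S_v)^2\,dv$ and the analogous identity for each $\log Y_v^{(i)}$; here the constancy of $\beta,\eta_1,\dots,\eta_d$ is essential, since it is precisely what lets $\beta\int_t^u d\log S_v$ integrate to $\beta\log(S_u/S_t)$ and hence exponentiate to the power $(S_u/S_t)^\beta$. Comparing the drift of $d\log X_v$ with the drift of $\beta\,d\log S_v+\sum_i\eta_i\,d\log Y_v^{(i)}+Z_v\,dv$ (the $dS_v/S_v$ and $dY_v^{(i)}/Y_v^{(i)}$ terms, together with their $\widetilde\gamma$-drift components, appear identically on both sides and cancel) forces
\[
Z_v=\alpha_v-\tfrac12\textstyle\sum_j(\Pi_v^{(j)})^2+\tfrac{\beta}{2}\sum_j\bigl(\sigma_v^{(0,j)}/S_v\bigr)^2+\tfrac12\sum_i\eta_i\sum_j\bigl(\sigma_v^{(i,j)}/Y_v^{(i)}\bigr)^2.
\]

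The main (and essentially only) obstacle is the algebraic bookkeeping in expanding $\sum_j(\Pi_v^{(j)})^2$: squaring produces diagonal pieces $\beta^2(\sigma^{(0,j)}/S)^2$ and $\eta_i^2(\sigma^{(i,j)}/Y^{(i)})^2$ alongside cross terms with coefficients $2\beta\eta_i$ and $2\eta_i\eta_l$. Combining $-\tfrac12\beta^2+\tfrac{\beta}{2}=\tfrac12\beta(1-\beta)$ and $-\tfrac12\eta_i^2+\tfrac{\eta_i}{2}=\tfrac12\eta_i(1-\eta_i)$ recovers the first line of \eqref{zzzz}, while the off-diagonal contributions give the covariance terms on the second line; the only delicate point is restricting the $\eta_i\eta_l$ double sum to $l>i$ to avoid double counting, which absorbs the factor of $2$. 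Integrating the matched differential from $t$ to $u$ and exponentiating then yields \eqref{eq:prop_pf_val|}.
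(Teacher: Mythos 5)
Your proof is correct and follows essentially the same route as the paper's: both use the solved linear system to collapse the portfolio dynamics to $dX_t/X_t=\alpha_t\,dt+\beta\,dS_t/S_t+\sum_i\eta_i\,dY_t^{(i)}/Y_t^{(i)}$, apply It\^o's formula to $\log X$, $\log S$, and $\log Y^{(i)}$, match the drift terms to identify $Z_t$, and then integrate and exponentiate. The only (cosmetic) difference is that you compute the quadratic variation by collecting the coefficients $\Pi_t^{(j)}$ of the independent Brownian motions and summing their squares, whereas the paper squares the relative-return SDE directly in terms of $dS_t/S_t$ and $dY_t^{(i)}/Y_t^{(i)}$; the resulting algebra and the expression \eqref{zzzz} are identical.
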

\begin{proof}
Applying  Ito's formula, we write down the SDEs for  $d\log(S_t)$ and  $d\log(Y_t^{(i)})$, for  each $i$:
 \begin{align}
 d\log(S_t)&=\frac{dS_t}{S_t}-\frac{1}{2}\sum_{j=0}^{d}\left(\frac{\sigma_t^{(0,j)}}{S_t}\right)^2dt,\\
 d\log(Y_t^{(i)})&=\frac{dY_t^{(i)}}{Y_t^{(i)}}-\frac{1}{2}\sum_{j=0}^{d}\left(\frac{\sigma_t^{(i,j)}}{Y_t^{(i)}}\right)^2dt.
\end{align}
Next we multiply   $d\log(S_t)$ by $\beta$ and each  $d\log(Y_t^{(i)})$ by $\eta_i$ respectively, and add these all together with the term $\alpha_t dt$ to obtain
\begin{equation}\label{eq:proofEq}
\begin{aligned}
\alpha_t dt &+ \sum_{j=1}^{d}\eta_jd\log(Y_t^{(j)}) + \beta d\log(S_t)=\frac{dX_t}{X_t}-\frac{1}{2}\sum_{i=1}^{d}\sum_{j=0}^{d}\eta_i\left(\frac{\sigma_t^{(i,j)}}{Y_t^{(i)}}\right)^2\!dt-\frac{1}{2}\beta\sum_{j=0}^{d}\left(\frac{\sigma_t^{(0,j)}}{S_t}\right)^2\!dt.
\end{aligned}
\end{equation}
Now, apply  Ito's formula to $\log(X_t)$, we have
\begin{equation*}
\begin{aligned}
\frac{dX_t}{X_t}=d\log(X_t)+\frac{1}{2}\left(\frac{dX_t}{X_t}\right)^2.
\end{aligned}
\end{equation*} 
Applying this to \eqref{eq:proofEq}, we get 
\begin{align}
\alpha_t dt &+ \sum_{j=1}^{d}\eta_jd\log(Y_t^{(j)}) + \beta d\log(S_t)
\notag\\
&=d\log(X_t)+\frac{1}{2}\left(\frac{dX_t}{X_t}\right)^2-\frac{1}{2}\sum_{i=1}^{d}\sum_{j=0}^{d}\eta_i\left(\frac{\sigma_t^{(i,j)}}{Y_t^{(i)}}\right)^2dt
-\frac{1}{2}\beta\sum_{j=0}^{d}\left(\frac{\sigma_t^{(0,j)}}{S_t}\right)^2dt.\label{eq:proofEq2}
\end{align}
Since   there exists strategy $(w_t)_{0 \le t \le T}$ that solves   \eqref{eq:linear_sys}, the portfolio   in   \eqref{eq:portfolio_val} satisfies the SDE 
\begin{equation}\label{eq:proof_eq_sdeX}
	\begin{aligned}
		\frac{dX_t}{X_t}=\alpha_t dt + \beta\frac{dS_t}{S_t}+\sum_{i=1}^{d}\eta_i\frac{dY_t^{(i)}}{Y_t^{(i)}}.
	\end{aligned}
\end{equation} 
By squaring    \eqref{eq:proof_eq_sdeX}, we have
 \begin{align}
\frac{1}{2}\left(\frac{dX_t}{X_t}\right)^2&=\frac{1}{2}\beta^2\left(\frac{dS_t}{S_t}\right)^2+\frac{1}{2}\sum_{i=1}^{d}\eta_i^2\left(\frac{dY_t^{(i)}}{Y_t^{(i)}}\right)^2+\sum_{i=1}^{d}\beta\eta_i\frac{dY_t^{(i)}}{Y_t^{(i)}}\frac{dS_t}{S_t}+\sum_{i=1}^{d}\sum_{l=i+1}^{d}\eta_i\eta_l\frac{dY_t^{(i)}}{Y_t^{(i)}}\frac{dY_t^{(l)}}{Y_t^{(l)}}\notag\\
&=\frac{1}{2}\beta^2\sum_{j=0}^{d}\left(\frac{\sigma_t^{(0,j)}}{S_t}\right)^2dt+\frac{1}{2}\sum_{i=1}^{d}\sum_{j=0}^{d}\eta_i^2\left(\frac{\sigma_t^{(i,j)}}{Y_t^{(i)}}\right)^2dt+\beta\sum_{i=1}^{d}\sum_{j=0}^{d}\eta_i\frac{\sigma_t^{(i,j)}}{Y_t^{(i)}}\frac{\sigma_t^{(0,j)}}{S_t}dt\notag\\
&\quad +\sum_{i=1}^{d}\sum_{l=i+1}^{d}\sum_{j=0}^{d}\eta_i\eta_l\frac{\sigma_t^{(i,j)}}{Y_t^{(i)}}\frac{\sigma_t^{(l,j)}}{Y_t^{(l)}}dt\notag\\
&=\alpha_t dt+\frac{1}{2}\beta\sum_{j=0}^{d}\left(\frac{\sigma_t^{(0,j)}}{S_t}\right)^2dt+
\frac{1}{2}\sum_{i=1}^{d}\sum_{j=0}^{d}\eta_i\left(\frac{\sigma_t^{(i,j)}}{Y_t^{(i)}}\right)^2dt-Z_tdt,\label{halfdx}
\end{align}
where $Z_t$ is defined in \eqref{zzzz}. Next, substituting  \eqref{halfdx} into   \eqref{eq:proofEq2} and rearranging, we obtain  the  link  between the   log returns of the portfolio and those of the index and factors:
\begin{equation}
 d\log(X_t)=\sum_{j=1}^{d}\eta_jd\log(Y_t^{(j)}) + \beta d\log(S_t)+Z_tdt.\label{dlogXsde}
 \end{equation}
Upon integrating \eqref{dlogXsde} and exponentiating, we obtain the desired result.
\end{proof}

We call the stochastic process $Z_t$ in \eqref{zzzz} the \textbf{slippage process} as it is typically negative and describes the deviations of the portfolio returns from the targeted returns. In particular, taking the logarithm  in \eqref{eq:prop_pf_val|} gives us the relationship between the portfolio's log return and the log returns of the index and each of the factors over any given period $[t,u]$, namely, 
\begin{equation*}
\begin{aligned}
\log\left(\frac{X_u}{X_t}\right)=\beta\log\left(\frac{S_u}{S_t}\right)+\sum_{j=1}^{d}\eta_i\log\left(\frac{Y_u^{(i)}}{Y_t^{(i)}}\right)+\int_t^u Z_v dv.
\end{aligned}
\end{equation*}
The first two terms  indicate that the portfolio's log return is proportional to the log returns of the index and its driving factors, with the proportionality coefficients being equal to the desired exposure. However, the portfolio's  log return   is subject to  the integrated slippage process. 

Integrating the square of the volatility of the index yields   the \emph{realized variance}. As such, the  portfolio value, $X$, is akin to the price  process of an LETF (see e.g. \cite{AZ})  in continuous time, except that $X$ also controls the exposure to various factors in addition to maintaining a fixed leverage ratio $\beta$ w.r.t. $S$.  Our framework allows for a multidimensional model, so  the portfolio value and   the realized slippage depend not only on the realized variances of the underlying factors, but also the  \emph{realized covariances} between the index and   factors. 

\begin{remark}
If we apply the notations,  $Y_t^{(0)} \equiv S_t$  and $\eta_0\equiv \beta$, then the portfolio value   simplifies to 
\begin{equation*}
\begin{aligned}
\frac{X_T}{X_t}=\prod_{i=0}^{d}\left(\frac{Y_T^{(i)}}{Y_t^{(i)}}\right)^{\eta_j}e^{\int_t^T Z_u du},
\end{aligned}
\end{equation*}
and the slippage process  admits a more compact expression
\begin{equation*}
\begin{aligned}
Z_t=\alpha_t+\frac{1}{2}\sum_{i=0}^d\sum_{j=0}^d\eta_i(1-\eta_i)\left(\frac{\sigma_t^{(i,j)}}{Y_t^{(i)}}\right)^2-\sum_{i=0}^d\sum_{l=i+1}^d\sum_{j=0}^d\eta_i\eta_l\frac{\sigma_t^{(i,j)}}{Y_t^{(i)}}\frac{\sigma_t^{(l,j)}}{Y_t^{(l)}}.
\end{aligned}
\end{equation*}
\end{remark}

\begin{example}
If there is no additional stochastic factor ($d=0$), then we have
\begin{equation*}
\begin{aligned}
\frac{X_T}{X_t}=\left(\frac{S_T}{S_t}\right)^\beta e^{\int_t^T Z_u du},
\end{aligned}
\end{equation*}
and
\begin{equation}\label{eq:zzzz_0}
Z_t:=\alpha_t+\frac{1}{2}\beta(1-\beta)\left(\frac{\sigma_t^{(0,0)}}{S_t}\right)^2,
\end{equation}
where $\sigma_t^{(0,0)}$ can depend on  $t$ and   $S_t$ as in the   general local volatility framework.  As such, the   slippage process does not involve a   covariance term, but it  reflects the  volatility decay, which is well documented for leveraged ETFs with an integer $\beta$  (see e.g. \cite{AZ} and \cite{LeungSantoliBook}).  As seen in \eqref{eq:zzzz_0}, there is an  erosion of   the portfolio (log-)return  that is  proportional to the realized variance of the index whenever $\beta\notin[0,1]$. 
\end{example}

\begin{example}
	If $d=1$, i.e. one exogenous market factor $Y$ along with the index $S$, we have
	\begin{equation*}
	\begin{aligned}
	\frac{X_T}{X_t}=\left(\frac{S_T}{S_t}\right)^\beta \left(\frac{Y_T}{Y_t}\right)^\eta e^{\int_t^T Z_u du},
	\end{aligned}
	\end{equation*}
	and
	\begin{equation}\label{eq:zt_1hiddenfactor}
	\begin{aligned}
	Z_t:=\alpha&+\frac{1}{2}\beta(1-\beta)\left(\frac{\sigma_t^{(S)}}{S_t}\right)^2+\frac{1}{2}\eta(1-\eta)\left(\frac{\sigma_t^{(Y)}}{Y_t}\right)^2-\beta\eta\frac{\sigma_t^{(S,Y)}}{S_tY_t},\\
	\end{aligned}
	\end{equation}
	where
	\begin{equation*}
	\begin{aligned}
	\sigma_t^{(S)}&:=\sqrt{\left(\sigma_t^{(0,0)}\right)^2+\left(\sigma_t^{(0,1)}\right)^2},\\
	\sigma_t^{(Y)}&:=\sqrt{\left(\sigma_t^{(1,0)}\right)^2+\left(\sigma_t^{(1,1)}\right)^2},\\
	\sigma_t^{(S,Y)}&:=\sigma_t^{(1,0)}\sigma_t^{(0,0)}+\sigma_t^{(1,1)}\sigma_t^{(0,1)}.
	\end{aligned}
	\end{equation*}
Here, $\sigma_t^{(S)}$, $\sigma_t^{(Y)}$, and $\sigma_t^{(S,Y)}$, are    functions of $t$, $S_t$, and $Y_t$, as in the   Local Stochastic Volatility framework. In addition to the value erosion proportional to the realized variance of the index (second term in \eqref{eq:zt_1hiddenfactor}),  there is another  realized variance decay term for the exogenous factor (third term in \eqref{eq:zt_1hiddenfactor}). Indeed it will be negative whenever the $Y$-exposure coefficient $\eta\notin[0,1]$. 	Beyond the realized variance of the index and the factor, there is also a term in the slippage process which accounts for the realized covariance between the index and factor (final term in   \eqref{eq:zt_1hiddenfactor}). It is negative if   $\sigma_t^{(S,Y)}$, $\beta$, and $\eta$ are all positive, reflecting  another source of  value erosion relative to the desired log return. 
\end{example}

\subsection{Portfolios with Futures}\label{sec:futures_pf_track}
Futures contracts are also useful  instruments for index tracking.\footnote{See e.g. \cite{AB} for a discussion on the empirical  performance of minimum variance hedging strategies using  futures contracts against an index ETF. } The price of a futures contract of maturity $T_k$\footnote{Despite the identical notation,  the maturities of the futures can be different from those in Section \ref{sec:track_pf_dyn}.} at time $t\le T_k$ is given by
\begin{equation}\label{eq:pricing_equation_f}
\begin{aligned}
f_t^{(k)}:= f^{(k)}(t,S_t,Y_t^{(1)},...,Y_t^{(d)})=\E^\Q\left[S_{T_k}\big|S_t,Y_t^{(1)},...,Y_t^{(d)}\right].
\end{aligned}
\end{equation}
The Feynman-Kac formula implies that the price must satisfy
\begin{equation}
\label{FKfutures}
\frac{\partial f^{(k)}}{\partial t}+\widetilde{\gamma}_t^{(0)}\frac{\partial f^{(k)}}{\partial S}+\widetilde{\gamma}_t^{(1)}\frac{\partial f^{(k)}}{\partial Y^{(1)}}+...+\widetilde{\gamma}_t^{(d)}\frac{\partial f^{(k)}}{\partial Y^{(d)}}+\frac{1}{2}\Tr\left[\Sigma^\intercal \nabla^2 f^{(k)}\Sigma\right]=0,
\end{equation}
with the terminal condition$f(T_k,s,y_{1},...,y_{d})=s$ for all vectors $(s,y_1,...,y_d)$ with strictly positive components. By \eqref{FKfutures} and Ito's formula, we get 
\begin{equation*}
\begin{aligned}
df_t^{(k)}
&=\left(-\widetilde{\gamma}_t^{(0)}\frac{\partial f^{(k)}}{\partial S}-\widetilde{\gamma}_t^{(1)}\frac{\partial f^{(k)}}{\partial Y^{(1)}}-...-\widetilde{\gamma}_t^{(d)}\frac{\partial f^{(k)}}{\partial Y^{(d)}}\right)dt+\frac{\partial f^{(k)}}{\partial S}dS_t\\
&~~~+\frac{\partial f^{(k)}}{\partial Y^{(1)}}dY_t^{(1)}+...+\frac{\partial f^{(k)}}{\partial Y^{(d)}}dY_t^{(d)}.
\end{aligned}
\end{equation*}

Now consider a self-financing portfolio $(X_t)_{t\ge 0}$ utilizing $N$ futures of maturities $T_1,T_2,...,T_N$ over a trading horizon $T\le T_k$  for all $k$. Denote by $(u^{(k)}_t)_{0\le t \le T}$, $k=1,...,N$  a generic adapted strategy   such that, at time $t$, the cash amount $u_t^{(k)}X_t$ is  invested in the $k$th futures contract.\footnote{It is costless to establish a  futures position, so no borrowing is involved.} Assuming that the futures contracts are continuously marked to market, the portfolio value evolves according to 
\begin{equation}\label{eq:portfolio_val_f}
\begin{aligned}
\frac{dX_t}{X_t} &= \sum_{k=1}^N u^{(k)}_t \frac{df^{T_k}_t}{f^{(k)}_t}+rdt.\\
 &=\left(r+\sum_{k=1}^{N}u_t^{(k)}F_t^{(k)}\right)dt+\sum_{k=1}^{N}u_t^{(k)}G_t^{(k)}\frac{dS_t}{S_t}+\sum_{i=1}^{d}\sum_{k=1}^{N}u_t^{(k)}H_{t}^{(k,i)}\frac{dY_t^{(i)}}{Y_t^{(i)}},
\end{aligned}
\end{equation}
where   coefficients are defined by
 \begin{align}
F_t^{(k)}&:=-\frac{\widetilde{\gamma}_t^{(0)}}{f_t^{(k)}}\frac{\partial f^{(k)}}{\partial S}-\frac{\widetilde{\gamma}_t^{(1)}}{f_t^{(k)}}\frac{\partial f^{(k)}}{\partial Y^{(1)}}-...-\frac{\widetilde{\gamma}_t^{(d)}}{f_t^{(k)}}\frac{\partial f^{(k)}}{\partial Y^{(d)}}\,, \label{EQF}\\
G_t^{(k)}&:=\frac{S_t}{f_t^{(k)}}\frac{\partial f^{(k)}}{\partial S}\,,\label{EQG}\\
H_{t}^{(k,i)}&:=\frac{Y_t^{(i)}}{f_t^{(k)}}\frac{\partial f^{(k)}}{\partial Y^{(i)}}, \quad  i=1,...,d.&\label{EQH}
\end{align}
Comparing \eqref{EQF} to \eqref{eq:nonlin_elastDef}, we notice that $F_t^{(k)}$ has no $r$ term.

Again suppose the trader selects a dynamic exposure coefficient $(\beta_t)_{0\le t \le T}$ with respect to the returns of $S$, as well as dynamic exposure coefficients for each factor return, $(\eta^{(1)}_t, \ldots, \eta^{(d)}_t)_{0\le t \le T}$. Suppose further that the trader chooses a target dynamic drift $(\alpha_t)_{0\le t \le T}$. In order for the portfolio to attain these desired path properties, we must solve the following linear system
\[\left( \begin{array}{c}
\alpha_t-r \\
\beta_t\\
\eta_t^{(1)}\\
...\\
\eta_t^{(d)} \end{array} \right)= \left( \begin{array}{ccc}
F_t^{(1)} & ... & F_t^{(N)} \\
G_t^{(1)} & ... & G_t^{(N)}\\
H_{t}^{(1,1)} & ... & H_{t}^{(N,1)}\\
& ... &\\
H_{t}^{(1,d)} & ... & H_{t}^{(N,d)}\\ \end{array}\right)
\left( \begin{array}{c}
u_t^{(1)} \\
...\\
u_t^{(N)}\end{array} \right).\]
The definitions of $F_t^{(k)}$, $G_t^{(k)}$, and $H_{t}^{(k,i)}$ in \eqref{EQF}-\eqref{EQH} imply that 
\begin{equation}
\begin{aligned}
F_t^{(k)}+\frac{\widetilde{\gamma}_t^{(0)}}{S_t}G_t^{(k)}+\frac{\widetilde{\gamma}_t^{(1)}}{Y_t^{(1)}}H_{t}^{(k,1)}+...+\frac{\widetilde{\gamma}_t^{(d)}}{Y_t^{(d)}}H_{t}^{(k,d)}=0,
\end{aligned}
\end{equation}
for each $k$. It follows that the rows of the linear system are linearly dependent. Thus, for the system to be consistent we must  require that 
\begin{equation}
\begin{aligned}
\alpha_t-r+\frac{\widetilde{\gamma}_t^{(0)}}{S_t}\beta_t+\frac{\widetilde{\gamma}_t^{(1)}}{Y_t^{(1)}}\eta_t^{(1)}+...+\frac{\widetilde{\gamma}_t^{(d)}}{Y_t^{(d)}}\eta_t^{(d)}=0,
\end{aligned}
\end{equation} for all $t\in[0,T]$.  As it turns out, this  is the same tracking condition    in   Proposition \ref{prop:rep_cond}, so the ensuing   discussion applies to the current case with futures as well.  However,  the tracking strategies associated with futures can be significantly different from those with other derivatives.

\section{Equity Index Tracking}\label{sec:equity_idx}
In this section we discuss two prominent equity derivatives pricing  models captured by our  framework and  present  the tracking conditions and strategies. 

\subsection{Black-Scholes Model}\label{sec:black_scholes_track}
We consider tracking using derivatives  on an underlying  index $S$ under  the Black-Scholes model, so there is no additional exogenous factors ($d=0$).  Under the risk-neutral measure, the index follows  
\begin{equation}\label{eq:bs_dynamics}
\begin{aligned}
dS_t=rS_tdt+\sigma S_t dB_t^\Q,
\end{aligned}
\end{equation}
where $B_t^\Q$ is a SBM and  $\sigma>0$ is the volatility parameter. Then, applying Proposition \ref{prop:rep_cond}, the tracking condition under the  Black-Scholes model is simply
\begin{equation}\label{eq:bs_replicating}
\begin{aligned}
\alpha_t=r(1-\beta_t), \qquad 0 \le t \le T.
\end{aligned}
\end{equation}

A few remarks are in order. First, a zero exposure ($\beta_t = 0$) implies that  the portfolio grows at   the risk free rate ($\alpha_t= r$).   If $\beta_t=1$, then $\alpha_t=0$, which means that perfect tracking  of $S$ is possible with no excess drift. This portfolio is a full investment in the index via some derivative. According to the tracking condition \eqref{eq:bs_replicating}, if $\beta_t>1$, then $\alpha_t<0$. This indicates that borrowing is required in order to leverage the underlying returns. Moreover, we have $\alpha_t>r$ as long as  $\beta_t<0$. Hence, by shorting the index, one achieves a drift above the risk-free rate. For any value of $\beta_t$ between 0 and 1, the strategy is trading off an investment in the money market account and the underlying index (via a derivative security).

Now suppose for the rest of this subsection that the drift and exposure coefficient are constant, namely, $\alpha_t\equiv\alpha$ and $\beta_t\equiv \beta$.  More specifically, the trader specifies the exposure to $S$ by setting  the value of $\beta$ so that condition \eqref{eq:bs_replicating} implies the fixed drift $\alpha=r(1-\beta)$. By combining Proposition \ref{prop:pf_dynamics} with condition \eqref{eq:bs_replicating},  the  portfolio value can be expressed explicitly as 
\begin{equation}\label{eq:BS_equity_val}
\begin{aligned}
X_t=X_0\left(\frac{S_t}{S_0}\right)^\beta e^{(r+\frac{\beta\sigma^2}{2})(1-\beta)t},
\end{aligned}
\end{equation}
or equivalently, in terms of log-returns
\begin{equation*}
\begin{aligned}
\log\left(\frac{X_t}{X_0}\right)=\beta\log\left(\frac{S_t}{S_0}\right)+ (r+\frac{\beta\sigma^2}{2})(1-\beta)t.
\end{aligned}
\end{equation*}
In particular, the slippage process is a constant, given by 
\begin{equation*}
\begin{aligned}
Z_t=(r+\frac{\beta\sigma^2}{2})(1-\beta).
\end{aligned}
\end{equation*} It is also  quadratic concave in $\beta$.   It follows that the slippage  is non-negative for $\beta \in [-{2r}{\sigma^{-2}},  1]$, and is strictly  negative  otherwise. Therefore, for constant exposure coefficient $\beta$ outside (resp. inside)  of the interval, $[-{2r}{\sigma^{-2}},  1]$, the log-return of the tracking portfolio is lower  (resp. higher) than  the corresponding multiple $(\beta)$ of the index's log-return. 

To better understand the slippage, we take $r=0.05$ and $\sigma=0.2$. Then, for any $\beta \notin [-2.5,1]$, the tracking portfolio's log-return falls short of the respective multiple of the index's log-return. To illustrate this, we display in Figure \ref{fig:BS_tracking_error} the simulated sample paths of the portfolio values, along with their respective benchmark (whose log return is equal to the respective multiple of the index's log return),  for $\beta\in\left\{-1,2,3\right\}$. As expected, when $\beta = 2$ or $3$, the portfolio underperforms compared to the benchmark. In contrast, the portfolio outperforms the benchmark when  $\beta=-1 \in [-2.5,1]$. 

\begin{figure}
	\begin{centering}
		\subfigure[$\beta=-1$]{\includegraphics[trim={1.2cm 0.5cm 1.5cm 0.3cm},clip,width=5in]{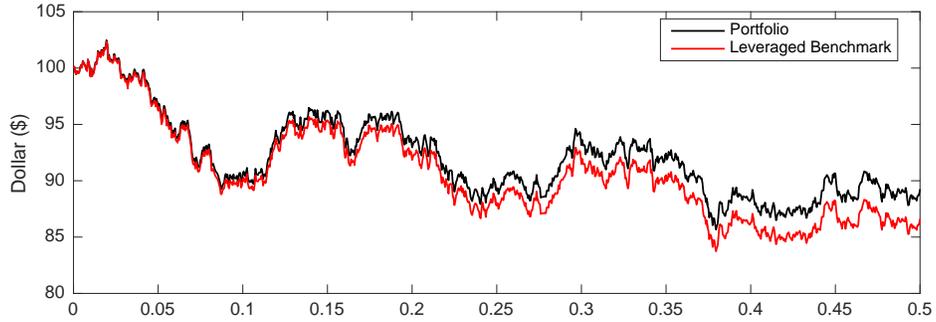}}
		\subfigure[$\beta=2$]{\includegraphics[trim={1.2cm 0.5cm 1.5cm 0.3cm},clip,width=5in]{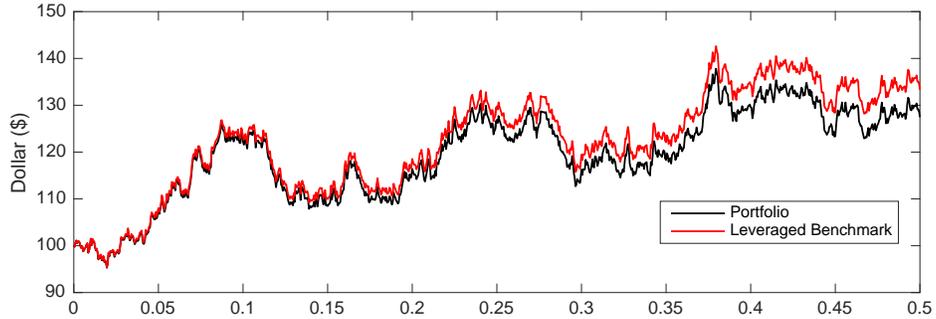}}
		\subfigure[$\beta=3$]{\includegraphics[trim={1.2cm 0.5cm 1.5cm 0.3cm},clip,width=5in]{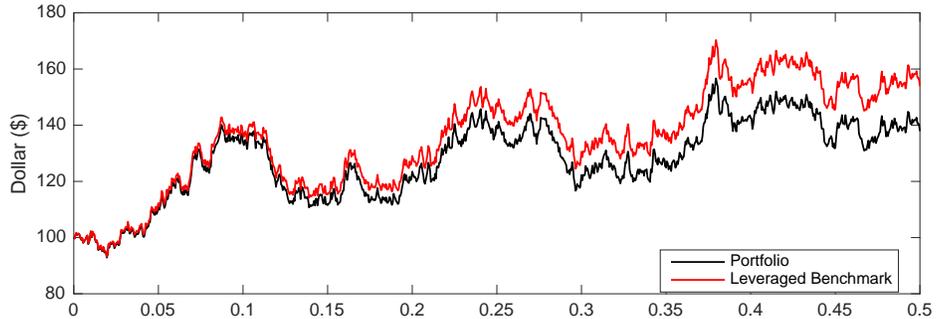}}
		\caption{\small{Sample paths  of portfolio values (in \$) compared to their benchmarks for (a) $\beta=-1$, (b) $\beta=2$, and (c) $\beta=3$ under the Black-Scholes model. Parameters: $X_0=100$, $S_0$, $r=0.05$,   $\sigma=0.2$, and $T=0.5$. The benchmark is defined such that its log-return is equal to $\beta$ times the index's log-return, with an initial value of $\$100$.}}
		\label{fig:BS_tracking_error}
	\end{centering}
\end{figure}

The associated  strategy achieving such path properties requires the use of at least $d+1=1$ derivative. Using exactly $1$ leads to a unique strategy. To find the unique strategy, we can (without loss of generality) solve the corresponding equation in \eqref{eq:linear_sys} to get $w_t =\beta/D_t$. Let us compare   the tracking strategies using  call options and futures contracts. First, consider a call on the index with expiration date  $T_c$ and  strike $K$. Its price is given by the \cite{blackscholes} formula 
\begin{equation*}
\begin{aligned}
c_t:=c(t,S_t)=S_tN\left(d_+(t,S_t)\right)-Ke^{-r(T_c-t)}N\left(d_-(t,S_t)\right),
\end{aligned}
\end{equation*}
where 
\begin{equation*}
\begin{aligned}
d_\pm(t,S_t)&=\frac{\log\left(S_t/K\right)+(r\pm\frac{\sigma^2}{2})(T_c-t)}{\sigma\sqrt{T_c-t}},\quad N(x)=\int_{-\infty}^{x}\frac{1}{\sqrt{2\pi}}e^{-\frac{u^2}{2}}du.
\end{aligned}
\end{equation*}
Therefore, to obtain an exposure coefficient of $\beta$ to the index returns, requires holding
\begin{equation}\label{eq:call_strat}
\begin{aligned}
\frac{w_tX_t}{c_t}=\frac{\beta X_0}{S_0}\left(\frac{S_t}{S_0}\right)^{\beta-1} e^{(r+\frac{\beta\sigma^2}{2})(1-\beta)t}\frac{1}{N\left(d_+(t,S_t)\right)}
\end{aligned}
\end{equation}
units of call option at time $t$.

The  price of a  futures  written  on $S$ with maturity  $T_f$ is given by $f_t:=f(t,S_t)=S_te^{r(T_f-t)}$ for $t\le T_f$. Therefore, to achieve  an exposure with coefficient $\beta$  requires holding
\begin{equation}\label{eq:futs_strat}
\begin{aligned}
\frac{w_tX_t}{f_t}= \frac{\beta X_0}{S_0}\left(\frac{S_t}{S_0}\right)^{\beta-1} e^{(r+\frac{\beta\sigma^2}{2})(1-\beta)t-r(T_f-t)}.
\end{aligned}
\end{equation}
  contracts at time $t$.

\begin{figure}
	\begin{centering}
		\subfigure[Options]{\includegraphics[trim={1.2cm 0.7cm 1.1cm 0.5cm},clip,width=3in]{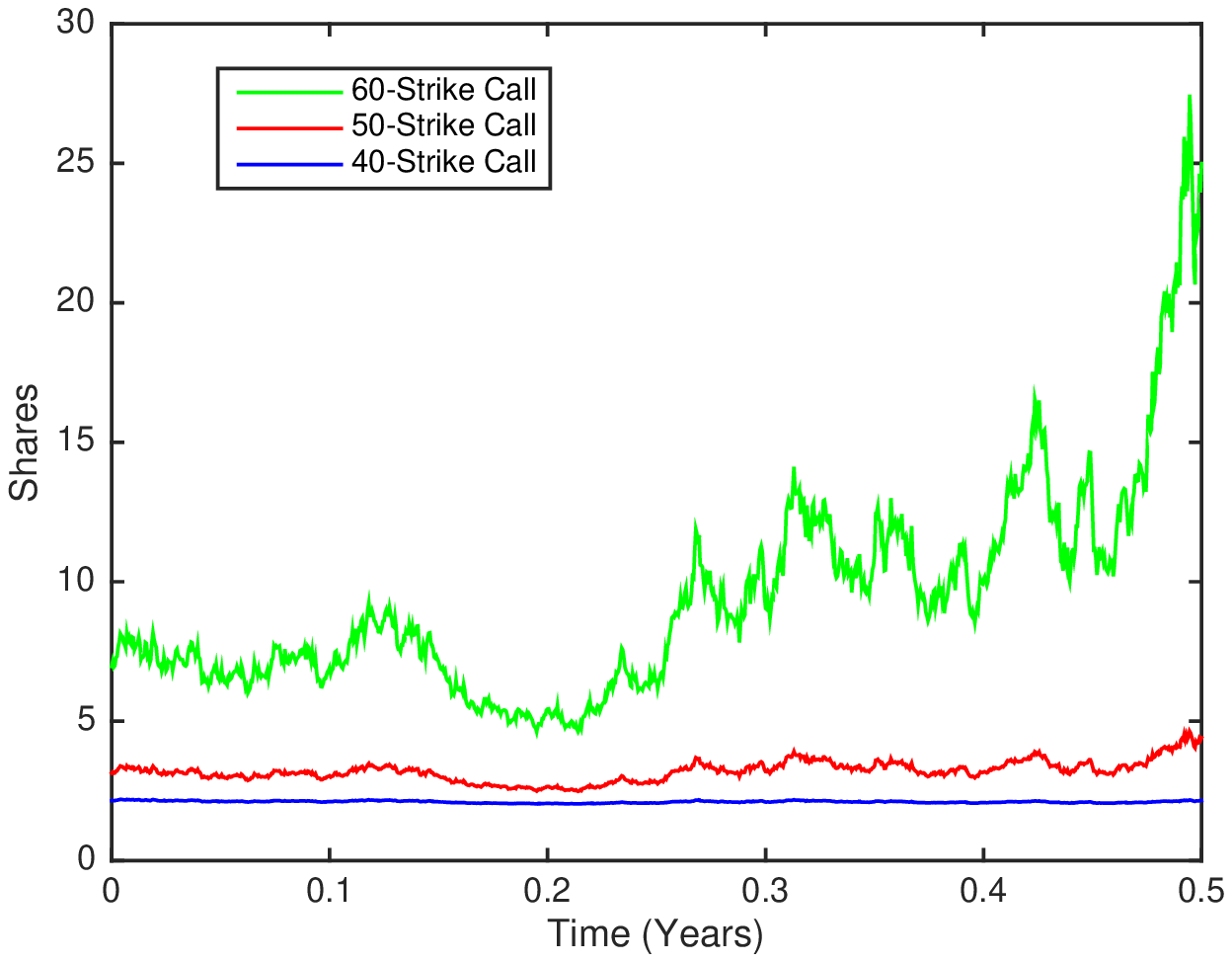}}
		\subfigure[Futures]{\includegraphics[trim={1cm 0.7cm 1.1cm 0.5cm},clip,width=3in]{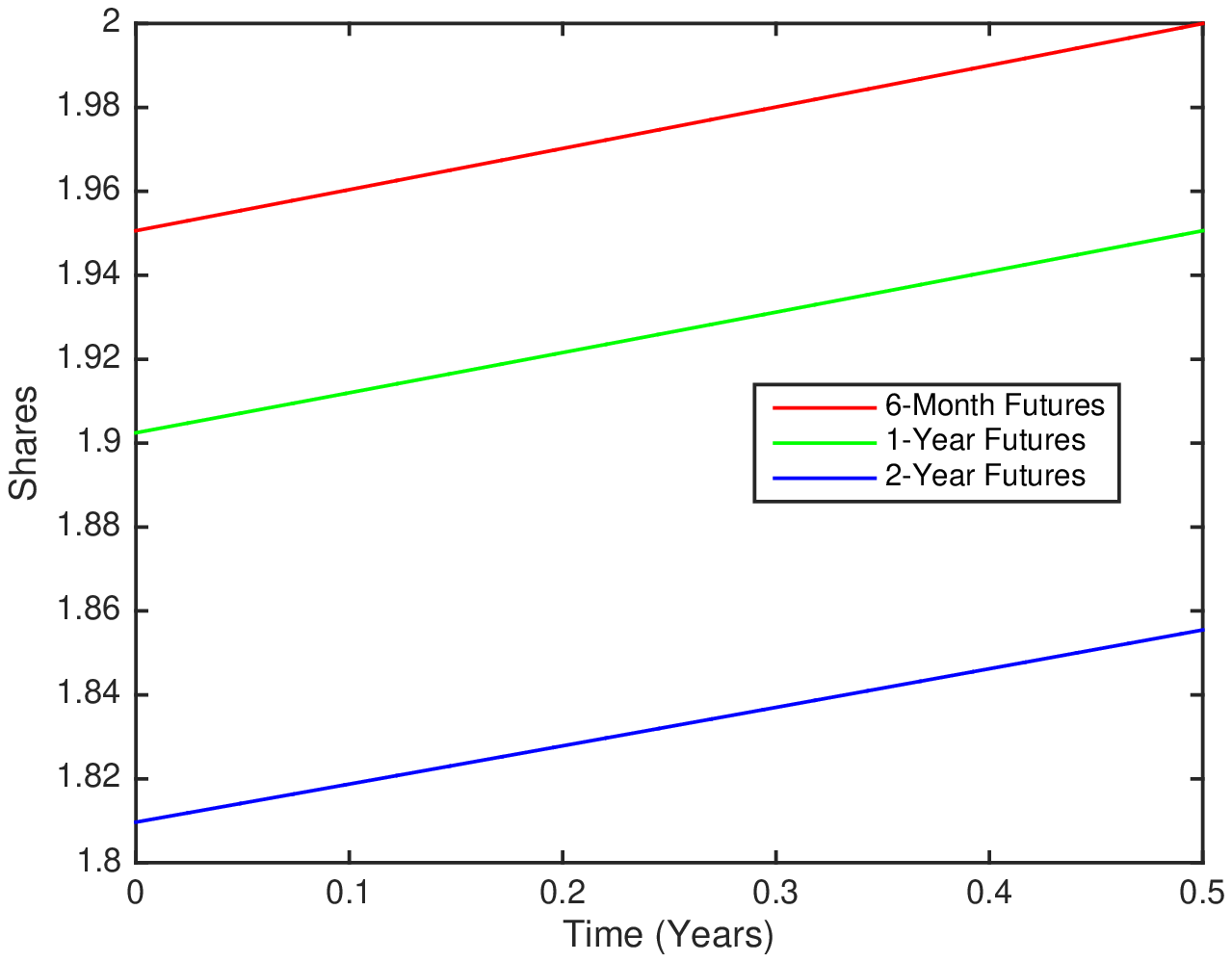}}
		\caption{\small{Simulations of  tracking strategies when using  (a)  a call option of different strike $K\in\left\{40,50,60\right\}$ and $T_c=0.5$, and (b) futures with maturity $T_f\in\left\{0.5,1,2\right\}$ under the Black-Scholes model. Parameters: $S_0=50$, $X_0=100$, $r=0.05$, $\sigma=0.2$, and $T=0.5$.}}
		\label{fig:BShedge_simulation}
	\end{centering}
\end{figure}

\begin{figure}
	\begin{centering}
		\subfigure[Options]{\includegraphics[trim={1.2cm 0.5cm 1.5cm 0.3cm},clip,width=5in]{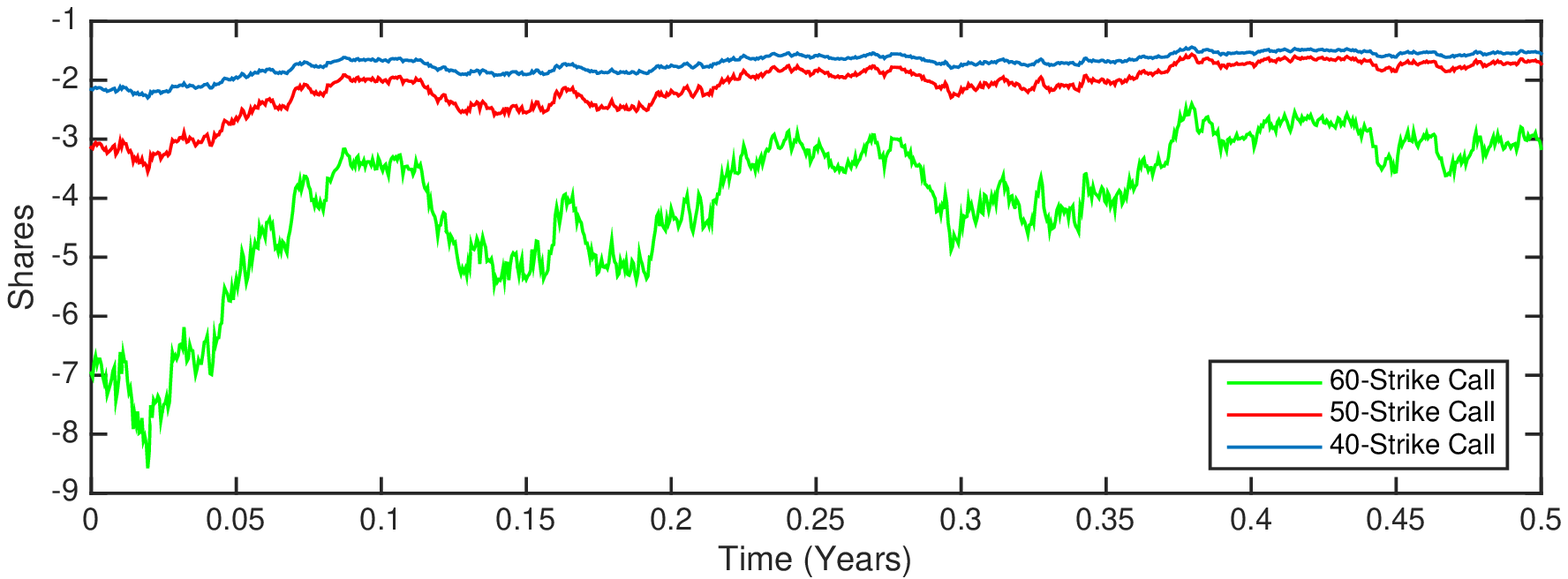}}
		\subfigure[Futures]{\includegraphics[trim={1.2cm 0.5cm 1.5cm 0.3cm},clip,width=5in]{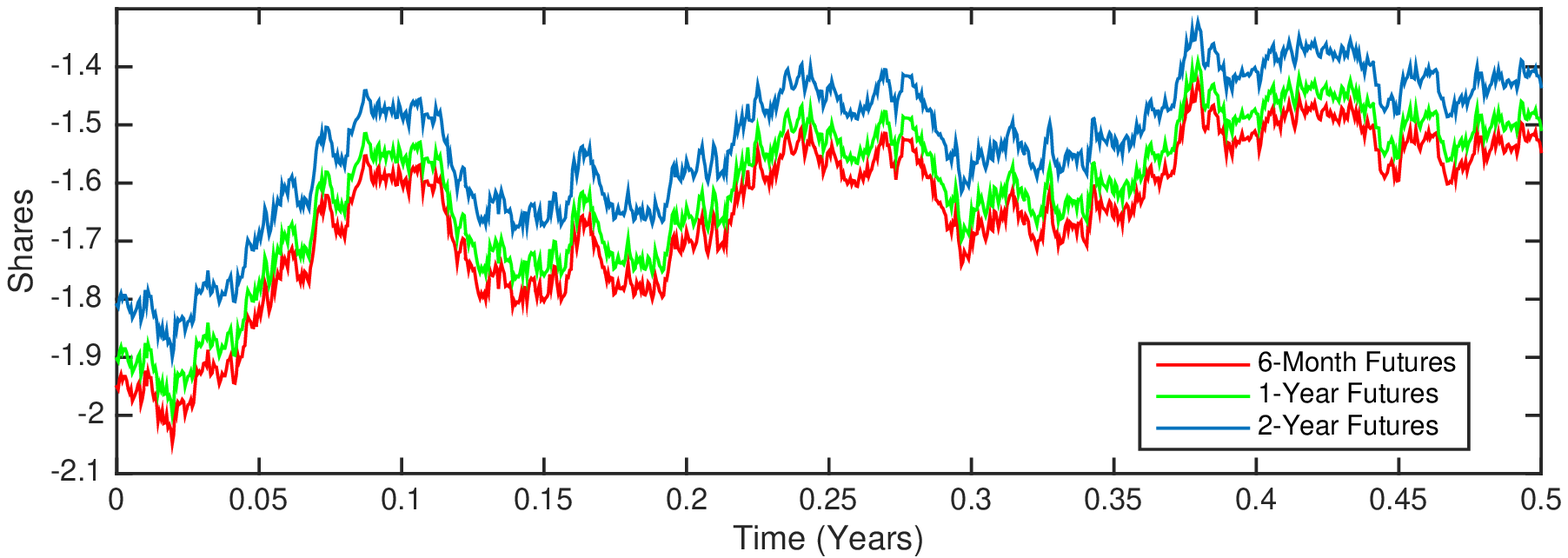}}
		\caption{\small{Simulation of tracking strategies for (a) call options of various strikes ($K\in\left\{40,50,60\right\}$ and $T_c=0.5$) and (b) futures contracts of various maturities ($T_f\in\left\{0.5,1,2\right\}$). These strategies target $\beta=-1$ over the trading horizon until $T=0.5$. Parameters are $S_0=50$, $X_0=100$, $r=0.05$ and $\sigma=0.2$. For the ease of comparison, the strategies are based on sample path of the reference index used in Figure \ref{fig:BS_tracking_error}.}}
		\label{fig:BS_tracking_strats}
	\end{centering}
\end{figure}

To gain further intuition, let us compare the above strategies when the investor seeks a  \emph{unit exposure} ($\beta=1$) with respect to $S$. The corresponding futures and options holdings are, respectively,   
\begin{equation*}
 \frac{ X_0}{S_0} e^{-r(T_f-t)}
 \qquad \text{and} \qquad \frac{ X_0}{S_0N\left(d_+(t,S_t)\right)}\,.
 \end{equation*}
First, these strategies can be viewed as the reciprocal of the associated delta hedge. Under this model, both calls and futures allow for perfect tracking of $S$, since  $\beta=1$ implies that  $\alpha=0$.  However, the two   strategies are very different. 

With a call option, the strategy is stochastic, depending crucially on the index dynamics. In Figure \ref{fig:BShedge_simulation}(a), we  display  the hedging strategies for call options of strikes $K\in\left\{40,50,60\right\}$ with a  common maturity equal to the end of the trading horizon (6 months). The fluctuation of each strategy  depends on the moneyness of the option. When $S_t>>K$,  $N(d_+(t,S_t))$ is   close to $1$ and movements in the call   price mimic those in the underlying equity index. As s result, the strategy is roughly constant over time as shown by the bottom path in Figure \ref{fig:BShedge_simulation}(a). In contrast, if the option used is deep out-of-the-money ($S_t<<K$), then  $N(d_+(t,S_t))$ is   close to $0$, meaning that the investor needs to hold many units of this call, whose per-unit price is almost zero in such a scenario, to gain sufficient  direct exposure to the   index $S$. Consequently, small movements can lead to very large changes in the holdings over time (see the top path in Figure \ref{fig:BShedge_simulation}).

For the futures, even though the contract value is stochastic,  the strategy is time-deterministic with position becomes increasingly long  exponentially  over time at the risk-free rate. In  Figure \ref{fig:BShedge_simulation}(a), we compare the positions corresponding to the futures contracts with different maturities, and notice  that tracking with a  shorter-term futures  requires more units of futures in the portfolio.

  Figure \ref{fig:BS_tracking_strats} displays the strategies for   ``inverse tracking"  portfolios with exposure coefficient  $\beta=-1$. As expected, short positions are used, but  the   option strategy is most (resp. least) stable when the option is most in (resp. out of) the money ($K=40$).   The futures strategy is no longer time-deterministic, though the position still shows an  increasing trend towards maturity.   In fact, the stochastic futures strategy is now proportional to $S^{-2}$, as seen in \eqref{eq:futs_strat} when $\beta =-1$. Comparing across maturities, the shortest-term (resp. longest-term) futures has the most (least) short position, but the position remains negative for all maturities.

\subsection{Heston Model}\label{sec:heston}
We now discuss the tracking problem under the  \cite{hestonModel}  model for the equity index.  Under the risk-neutral measure, the dynamics of the reference  index   and stochastic volatility factor are given by
\begin{equation*}
	\begin{aligned}
		dS_t&=rS_tdt+\sqrt{Y_t}S_tdB_{t}^{\Q,0}\\
		dY_t&=\widetilde{\kappa}(\widetilde{\theta}-Y_t)dt+\nu\big(\rho\sqrt{Y_t}dB_{t}^{\Q,0}+ \sqrt{1-\rho^2}\sqrt{Y_t}dB_{t}^{\Q,1}\big),
	\end{aligned}
\end{equation*}where $B_{t}^{\Q,0}$ and $B_{t}^{\Q,1}$ are  two independent SBMs and $\rho\in(-1,1)$ is the instantaneous correlation parameter. 
The stochastic volatility factor  $Y$  is not traded and is driven by a Cox-Ingersoll-Ross (CIR) process. If we assume the Feller condition\index{Feller condition} $2\widetilde{\kappa}\widetilde{\theta}\ge\nu^2$ (see \citet{feller1951two}) and $Y_0>0$, then $Y$ stays strictly positive at all times almost surely under the risk-neutral measure.

Under the Heston Model, the tracking condition \eqref{eq:track_cond} becomes
\begin{equation*}
	\begin{aligned}
		\alpha_t=r(1-\beta_t)-\widetilde{\kappa}\left(\frac{\widetilde{\theta}}{Y_t}-1\right)\eta_t , \qquad 0 \le t \le T.
	\end{aligned}
\end{equation*}
The portfolio is subject to the stochastic drift  $\alpha_t$ which does not vanish as  long as $\eta_t \neq 0$ and $\beta_t \neq 1$. Therefore, perfect tracking is not achievable.



Let us set  the coefficients to be  constant, i.e. $\beta_t=\beta$ and $\eta_t=\eta$ for all $t\in[0,T]$.  In the Heston Model,  a portfolio generally needs at least $d+1=2$ derivatives to control risk exposure with respect to the two sources of randomness.  We solve for the index exposure $\beta$  and factor exposure $\eta$ from the $2\times2$ system:
\[\left( \begin{array}{c}
\beta\\
\eta\\ \end{array} \right)= \left( \begin{array}{cc}
D_t^{(1)} & D_t^{(2)}\\
E_{t}^{(1)}& E_{t}^{(2)}\\
\end{array}\right)
\left( \begin{array}{c}
w_t^{(1)} \\
w_t^{(2)}\end{array} \right).\]
The second superscript is suppressed on the factor elasticities since there is only one exogenous factor here. By a simple inversion, the portfolio weights are  
\begin{equation}\label{eq:oneFactorStrats}
	\begin{aligned}
		w_t^{(1)}=\frac{\beta E_t^{(2)}-\eta D_t^{(2)}}{D_t^{(1)}E_t^{(2)}-D_t^{(2)}E_t^{(1)}}, \qquad  
		w_t^{(2)}=\frac{-\beta E_t^{(1)}+\eta D_t^{(1)}}{D_t^{(1)}E_t^{(2)}-D_t^{(2)}E_t^{(1)}}.
	\end{aligned}
\end{equation}
Of course, this solution is only valid if
\begin{equation}\label{eq:oneFactorDet}
	\begin{aligned}
		{D_t^{(1)}E_t^{(2)}\neq D_t^{(2)}E_t^{(1)}}.
	\end{aligned}
\end{equation}
For instance, using two European call (or put) options  on $S$ with different strikes can lead to the   trading strategy that generates the desired exposure associated with the given coefficients $\beta$ and $\eta$. However, issues may arise when only futures are used, as we will discuss next.

\begin{example}\label{ex:heston_futures}
Using two futures on $S$, with maturities $T_f^{(k)}$ ($k=1,2$), the corresponding portfolio weights are   given by 
\begin{equation}\label{eq:oneFactorStratsFutures}
 u_t^{(1)}=\frac{\beta H_t^{(2)}-\eta G_t^{(2)}}{G_t^{(1)}H_t^{(2)}-G_t^{(2)}H_t^{(1)}},  \quad \text{and} \quad 
u_t^{(2)}=\frac{-\beta H_t^{(1)}+\eta G_t^{(1)}}{G_t^{(1)}H_t^{(2)}-G_t^{(2)}H_t^{(1)}},
 \end{equation}
provided that ${G_t^{(1)}H_t^{(2)}\neq G_t^{(2)}H_t^{(1)}}$.\footnote{Again, the second superscript is suppressed on the factor elasticities since there is only one exogenous factor here.}  	Since the futures  prices are 	$f_t^{T_k}=S_te^{r(T_f^{(k)}-t)}$, for $k=1,2$,  the elasticities (see \eqref{EQG}-\eqref{EQH}) simplify to 
	\begin{equation*}
		\begin{aligned}
			G_t^{(k)}&=\frac{S_t}{f_t^{T_f^{(k)}}}\frac{\partial f^{T_f^{(k)}}}{\partial S}=1, \quad  \text{and} \quad 
			H_{t}^{(k)}&=\frac{Y_t}{f_t^{T_f^{(k)}}}\frac{\partial f^{T_f^{(k)}}}{\partial Y}=0.
		\end{aligned}
	\end{equation*}
However, this means that  ${G_t^{(1)}H_t^{(2)} = G_t^{(2)}H_t^{(1)}}=0$, so the strategy in  \eqref{eq:oneFactorStratsFutures} is not well defined. Hence,  it is generally  impossible to construct a futures portfolio that generates the desired exposure with respect to both the index and     stochastic volatility factor for any non-zero coefficients $\beta$ and $\eta$.
\end{example}

As shown in Example \ref{ex:heston_futures}, in order to gain exposure to $S$ and $Y$, the derivative need to have a non-zero sensitivity with respect to $Y$. If the investor does not seek exposure to $Y$  (i.e. $\eta=0$), then  she  only needs a single futures on $S$ to obtain the corresponding volatility-neutral portfolio. Next, we    show that by including   a futures on $Y$ we obtain a tracking portfolio  that can generate any desired exposure to $S$ and $Y$.





\begin{example}\label{ex:heston_futures_plusVol}
	The Heston model can be  viewed as a joint   model for the market index   and  volatility index. The CIR process has also been used to model the volatility index  due to their common mean-reverting property (see  \cite{Grunbichler1996985} and \cite{futuresVIX}, among others).  Suppose there exist  a futures on the market index $S$ as well as a futures on the volatility index $Y$, and consider a dynamic portfolio of these  two futures contracts. We use the superscript $1$ to indicate the futures on the index (of maturity $T_f$) and the superscript $2$ to indicate the futures on the variance process (of maturity $T_y$). The prices of the   $T_f$-futures on $S$ and the $T_y$-futures on $Y$ are respectively given by 
		\begin{equation*}
	\begin{aligned}
 f_t^{T_f}=S_te^{r(T_f-t)} \quad \text{ and } \quad 	g^{T_y}_t =Y_te^{-\widetilde{\kappa}(T_y-t)}+\widetilde{\theta}(1-e^{-\widetilde{\kappa}(T_y-t)}).
	\end{aligned}
	\end{equation*}
	The relevant price elasticities are 
	\begin{equation*}
	\begin{aligned}
	G_t^{(1)}=1,\quad H_t^{(1)}=0,\quad G_t^{(2)} = 0, \quad \text{and} \quad H_{t}^{(2)}=\frac{Y_t}{g^{T_y}_t}e^{-\widetilde{\kappa}(T_y-t)}.
	\end{aligned}
	\end{equation*}
	The strategy $(u_t^{(1)},u_t^{(2)})$ achieving the exposure with coefficients $\beta$ and $\eta$ is found from the system:
	\[\left( \begin{array}{c}
	\beta\\
	\eta\\ \end{array} \right)= \left( \begin{array}{cc}
	G_t^{(1)} & G_t^{(2)}\\
	H_{t}^{(1)}& H_{t}^{(2)}\\
	\end{array}\right)
	\left( \begin{array}{c}
	u_t^{(1)} \\
	u_t^{(2)}\end{array} \right).\]
The system admits a unique solution, yielding the portfolio weights:
	\begin{equation*}
		\begin{aligned}
		u_t^{(1)}=\beta,\quad \text{and}\quad u_t^{(2)}=\eta+\eta\frac{\widetilde{\theta}}{Y_t}\left(e^{\widetilde{\kappa}(T_y-t)}-1\right).
		\end{aligned}
	\end{equation*}	
	Interestingly, the portfolio weight $u_t^{(1)}$ (resp. $u_t^{(2)}$) depends only on $\beta$ (resp. $\eta$).
\end{example}

Finally, we discuss the slippage process under the Heston model. Applying Proposition \ref{prop:pf_dynamics},  we obtain
	\begin{equation}\label{HestonZ}
	Z_t=r-r\beta-\widetilde{\kappa}\left(\frac{\widetilde{\theta}}{Y_t}-1\right)\eta+\frac{1}{2}\beta(1-\beta)Y_t+\frac{1}{2}\eta(1-\eta)\frac{\nu^2}{Y_t}-\beta\eta\nu\rho.
	\end{equation}
The term $\frac{1}{2}\beta(1-\beta)Y_t$ in \eqref{HestonZ} indicates  that the current slippage $Z_t$ depends on the instantaneous  variance $Y$ of the index $S$.  Similarly,   the term $\frac{1}{2}\eta(1-\eta)\frac{\nu^2}{Y_t}$ reflects  the dependence of the slippage on the instantaneous variance of the stochastic volatility $Y$. As in the Black-Scholes case, for   $\eta\notin[0,1]$, this term is negative. 	The final term $-\beta\eta\nu\rho$ is the  instantaneous  covariance between the index and the stochastic volatility. Since $\nu>0$, the term is positive whenever $\beta\eta\rho<0$. This happens either when (i) all three are negative, or (ii) exactly 1 is negative.  Since equity returns and volatility are typically negatively correlated\footnote{This phenomenon is called \emph{asymmetric volatility} and was first observed by \cite{blackAsymmetricVol}.}  ($\rho<0$), so going  long on both the index and  stochastic volatility ($\beta, \eta>0$) can  generate positive returns. This covariance term can  offset  some losses due to volatility decay.

\section{Volatility Index Tracking}\label{sec:volatility_idx}
In this section, we discuss tracking of the    volatility index, VIX, under two   continuous-time models.  These models were first applied to pricing volatility futures and options   by \cite{Grunbichler1996985} and \cite{futuresVIX}, among others. We expand their analysis to understand the tracking performance of VIX  derivatives portfolios and VIX ETFs.

\subsection{CIR Model}\label{sec:one_fact_CIR}
The volatility index, denoted by  $S$ in this section,  follows the CIR process:\footnote{Or  the square root  (SQR) process in the terminology of \cite{Grunbichler1996985}.} 
 \begin{align}\label{CIRQ}
dS_t = \widetilde{\kappa}\left(\widetilde{\theta}-S_t\right)dt + \sigma\sqrt{S_t} dB^\Q_t,
\end{align}
with constant parameters  $\widetilde{\kappa}, \widetilde{\theta},$ and  $\sigma >0$. If we assume the Feller condition\index{Feller condition} $2\widetilde{\kappa}\widetilde{\theta}\ge\sigma^2$ (see \citet{feller1951two}) and $S_0>0$, then $S$ stays strictly positive at all times almost surely. We omit the second  superscript on the SBM $B_t^\Q$ since there is only one SBM in this model. 

Under the CIR model, the tracking condition from \eqref{eq:track_cond} becomes 
\begin{equation}
\begin{aligned}
\alpha_t=r-\beta_t\widetilde{\kappa}\left(\frac{\widetilde{\theta}}{S_t}-1\right),
\end{aligned}
\end{equation}
for all $t\in[0,T]$. Therefore, for any given exposure coefficient $\beta_t$, there is a non-zero stochastic drift depending on the inverse of $S_t$.  In particular, if  $\beta_t=0$, then $\alpha_t=r$  and we recover the risk free rate by eliminating  exposure to the volatility  index. Next, suppose $\beta_t=1$ for a $100\%$ exposure to the volatility index. Then, the stochastic drift takes the form
\begin{equation}\label{eq:single_CIR_beta1}
\begin{aligned}
\alpha_t= \frac{\left(\widetilde{\kappa}+r\right)S_t-\widetilde{\kappa}\widetilde{\theta}}{S_t}, 
\end{aligned}
\end{equation}
which is negative (resp. positive) whenever $S_t$ is iss below (resp. above)  the critical level $\frac{\widetilde{\kappa}\widetilde{\theta}}{\widetilde{\kappa}+r}$. In addition, if $r=0$, then  the critical value is equal to $\widetilde{\theta}$.\footnote{Alternatively, we can assume $\widetilde{\kappa}>>r$ so that $\frac{\widetilde{\kappa}}{\widetilde{\kappa}+r}\approx1$ and the critical value is approximately $\widetilde{\theta}$.} Then, as $S_t$ mean-reverts to $\widetilde{\theta}$, the drift is on average equal to $0$, but is stochastic nonetheless. 

 Furthermore, according to Proposition \ref{prop:pf_dynamics}, the slippage process given any constant  $\beta \in \R$  in the CIR Model is given by
	\begin{equation*}
		\begin{aligned}
			Z_t &=r-\beta\widetilde{\kappa}\left(\frac{\widetilde{\theta}}{S_t}-1\right)+\frac{1}{2}\beta(1-\beta)\frac{\sigma^2}{S_t}.
		\end{aligned}
	\end{equation*}
	The second term reflects the mean reverting path behavior of $S$. And since $\widetilde{\theta}$ is the long-run mean of $S_t$,   this term is expected to stay around zero over time, though the deviation from zero is proportional to $\beta\widetilde{\kappa}$. The last term involves  the   variance of the volatility index, $\frac{\sigma^2}{S_t}$, and    is strictly negative for $\beta\notin[0,1]$, leading to value erosion.  Lastly, we observe that  $Z_t$ is  an affine function of  ${S_t}^{-1}$, which is an    inverse CIR process.  The moments and other statistics of such a process   are well known (see \cite{ahnGao1999}), so this form will be useful for understanding the distribution and computing expectations of $Z_t$.

Since there are $d=0$ factors outside of the volatility  index, $d+1=1$ derivatives allow for a unique tracking strategy. Specifically, we study the dynamics of portfolios of   futures    written on $S$.  For any   maturity $T$, the futures price is  
\begin{align}\label{fTCIR}
f^T_t := f(t,S,T) = \E^{\Q}\left[S_T|S_t=S\right] = (S-\widetilde{\theta})e^{-\widetilde{\kappa}(T-t)} +\widetilde{\theta}.
\end{align}
 In Figure \ref{fig:termstructure} we display two  term structures of VIX futures on two different dates, calibrated to the CIR model. The term structure can be either increasing   concave or decreasing   convex (see e.g. \cite{LiVIX} and Chap 5 of \cite{meanReversionBook}.) While the good fits  further suggest that  the CIR Model is a suitable model for  VIX, we remark that there exist examples of  irregularly shaped VIX term structures and calibrated  model parameters often change over time. This motivates us to investigate the tracking problem under a more sophisticated VIX model in the next section.

%

\begin{figure}[th]
	\begin{centering}
		\subfigure[Feb. 23, 2009]{\includegraphics[trim={1.2cm 0.5cm 1.5cm 0.5cm},clip,width=3in]{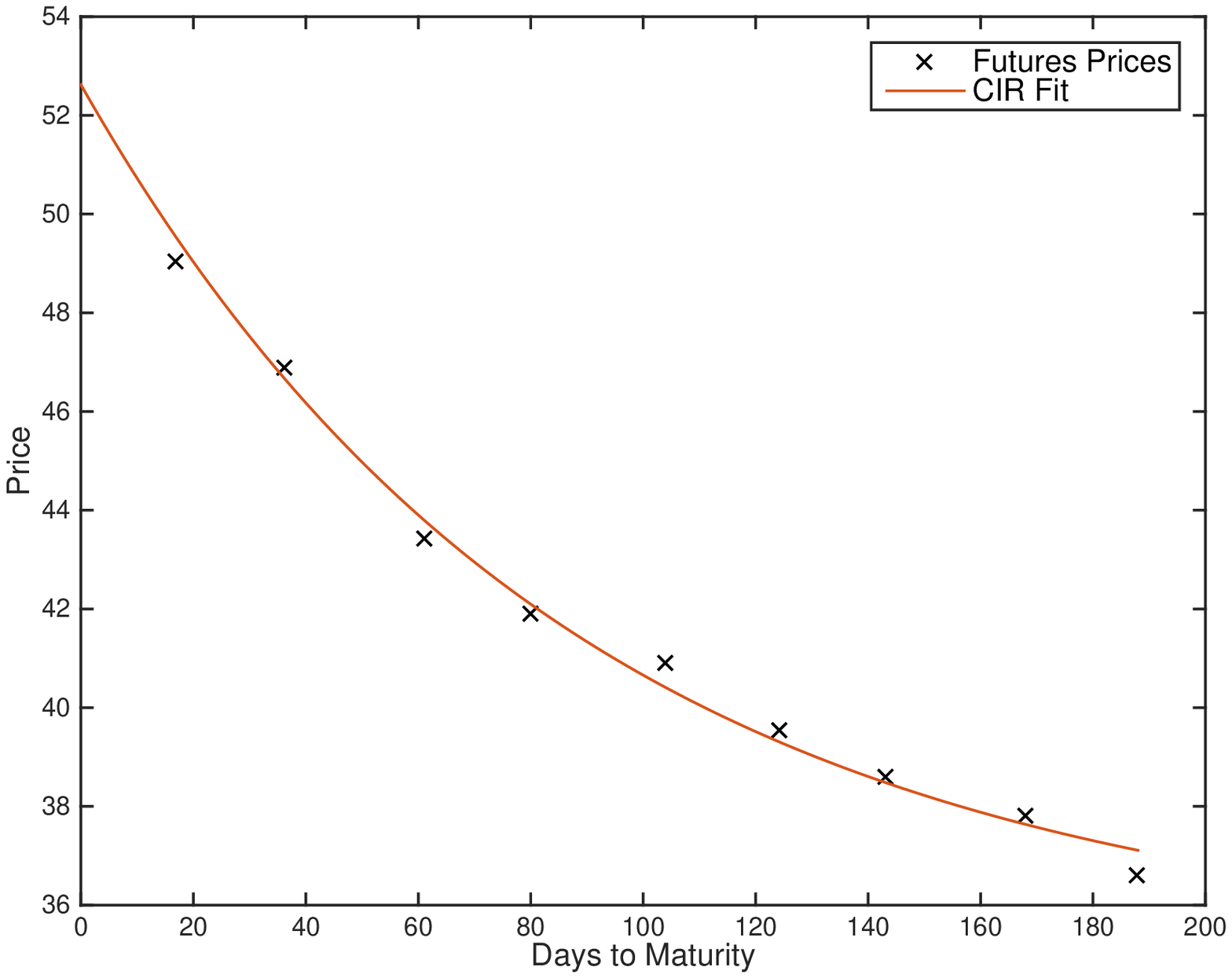}}
		\subfigure[Dec. 11, 2009]{\includegraphics[trim={1.2cm 0.5cm 1.5cm 0.5cm},clip,width=3in]{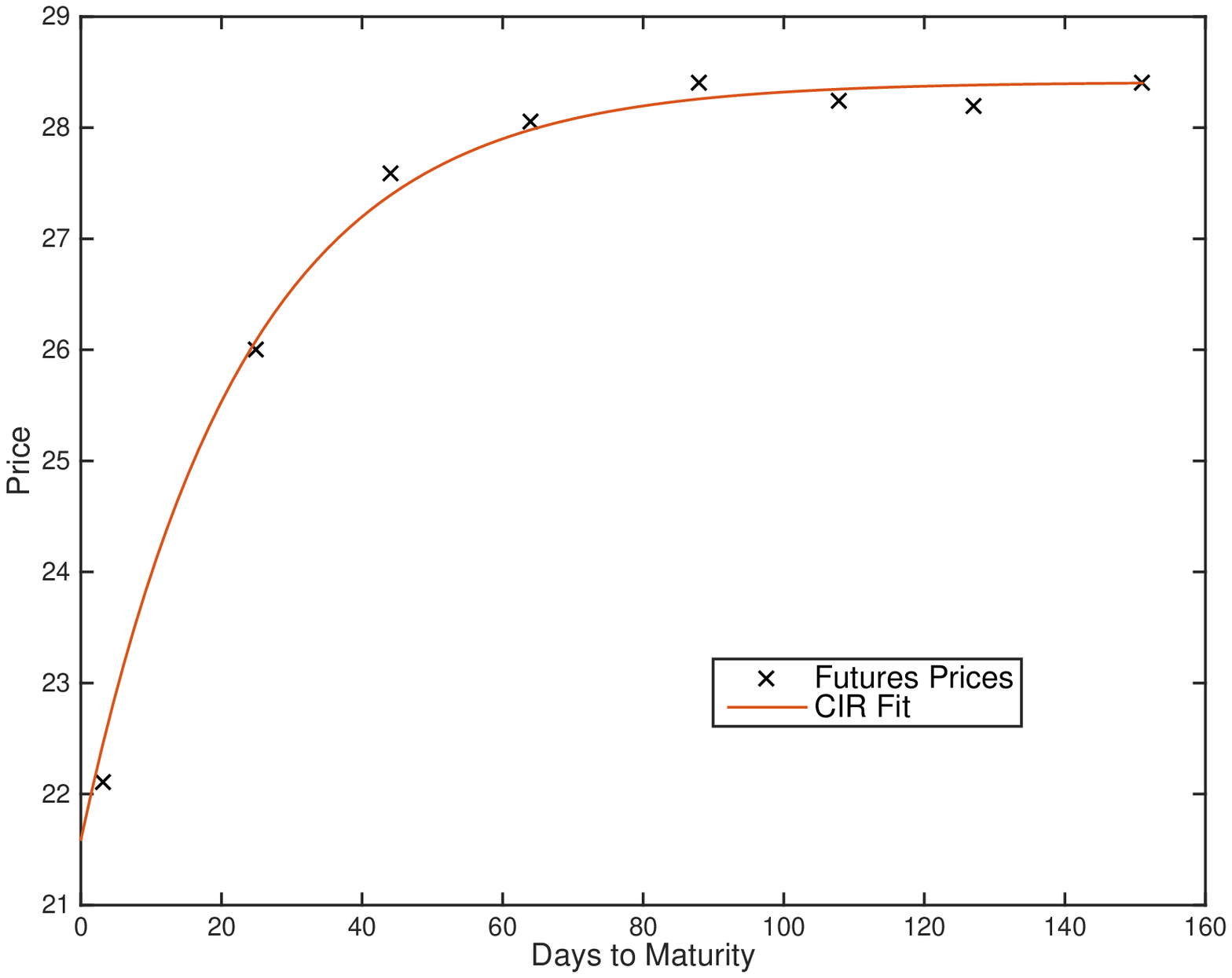}}\caption{\small{Term structure of VIX futures as observed on (a) Feb. 23, 2009 and (b) Dec. 11, 2009. The term structure has changed  from  decreasing convex to increasing  concave.}}
		\label{fig:termstructure}
	\end{centering}
\end{figure}
%

The futures trading strategy  is given by 
\begin{equation} \label{utvix}
\begin{aligned}
u_t =\beta+\frac{\beta\widetilde{\theta}}{S_t}\left(e^{\widetilde{\kappa}(T-t)}-1\right).
\end{aligned}
\end{equation} 
Of course, one may set $\beta=1$ to seek direct exposure to the volatility index. A number of VIX ETFs/ETNs attempt to gain direct exposure to VIX (e.g. VXX) by constructing a futures portfolios with time-deterministic weights. The following example elucidates how and to what extent such an  exchange-traded product falls short of this goal.

			\begin{figure}[H]
		\begin{centering}
			\includegraphics[trim={1.2cm 0.5cm 1.5cm 0.3cm},clip,width=5in]{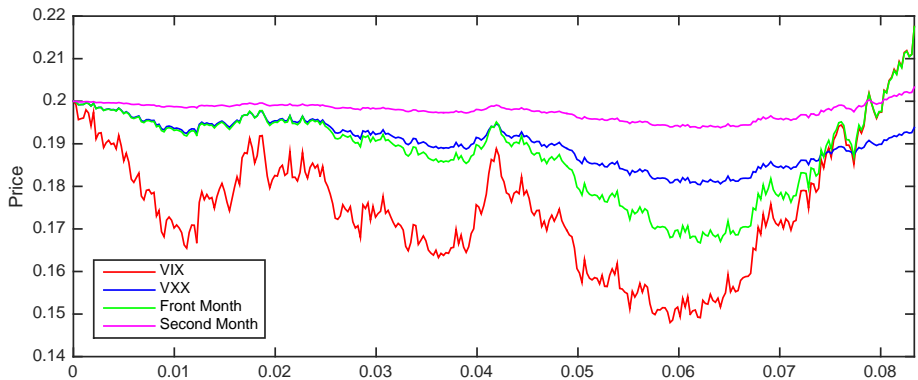}
			\caption{\small{Simulation of VXX along with VIX and the two futures contracts VXX holds over  1 month ($T=1/12$). Parameters are $S_0=0.2$, $\widetilde{\kappa}=20$, $\widetilde{\theta}=0.2$ and $\nu=0.4$. }}
 			\label{vxx_simulation}
		\end{centering}
	\end{figure}
	
\subsection{Comparison to VXX}\label{sec:VXXcomp}

Let us consider a portfolio  of futures with  a time-deterministic strategy. Its value evolves according to 
	\begin{align}\label{Vwtd}
	\frac{dV_t}{V_t} = u(t)\frac{df^{T_{i(t)}}_t}{f^{T_{i(t)}}_t} + (1-u(t))\frac{df^{T_{i(t)+1}}_t}{f^{T_{i(t)+1}}_t}+rdt,
	\end{align}
	where $i(t):=\min\{i: T_{i-1}< t \leq T_{i}\}$, $T_0:=0$, and the portfolio weight for the $T_i$-futures is 
	\begin{align}\label{wtd}
	u(t) := \frac{T_{i(t)}-t}{T_{i(t)} - T_{i(t)-1}}.
	\end{align}
This is the strategy employed by the popular VIX ETN, iPath S\&P 500 VIX Short-Term Futures ETN (VXX). The strategy starts by investing 100\% in the front-month VIX futures contract, and decreases its holding   linearly from 100\% to 0\% while  the weight on the second-month contract increases linearly from 0\% to 100\%.

  Figure \ref{vxx_simulation} illustrates such a portfolio with simulated index prices. Specifically,  we plot   the VIX   in red, and the VXX price in blue over one month. The component futures (front and second month) are plotted in green and purple (respectively). Compared to the VIX, the   VXX  is significantly  less volatile. This can be confirmed analytically since 
	\begin{align*}
	\left(\frac{dV_t}{V_t}\right)^2 = \left(\frac{u(t)}{S_t+\widetilde{\theta} (e^{\widetilde{\kappa}(T_{i(t)}-t)}-1)} + \frac{1-u(t)}{S_t+\widetilde{\theta} (e^{\widetilde{\kappa}(T_{i(t)+1}-t)}-1)}  \right)^2 (dS_t)^2 < \left(\frac{dS_t}{S_t}\right)^2,
	\end{align*}
	where the inequality is due to $\widetilde{\kappa}(T_j-t)>0$ so that $\widetilde{\theta}\left(e^{\widetilde{\kappa}(T_j-t)}-1\right)>0$ for any $j$.	As seen in Figure \ref{fig:VXX_returns_evolution}(a), both  VXX and the dynamic portfolio cannot perfectly track VIX over time. Nevertheless,  compared to VXX, the dynamic portfolio is more reactive to changes in VIX. In Figure \ref{fig:VXX_returns_evolution}(b), we plot two scatterplots of the annualized  returns  between the dynamic portfolio and VIX (top) and between VXX and VIX (bottom). On both plots, a solid straight line with slope 1 is a drawn for comparison. The dynamic portfolio, while not   tracking VIX perfectly, generates highly similar returns as VIX, and is visibly much closer to VIX than VXX. 
	
	\newpage 
	\white{.}	\vspace{20pt}
		\begin{figure}[H]
		\begin{centering}
			\subfigure[Portfolio Evolution]{\includegraphics[trim={1.2cm 0.5cm 1.5cm 0.3cm},clip,width=5in]{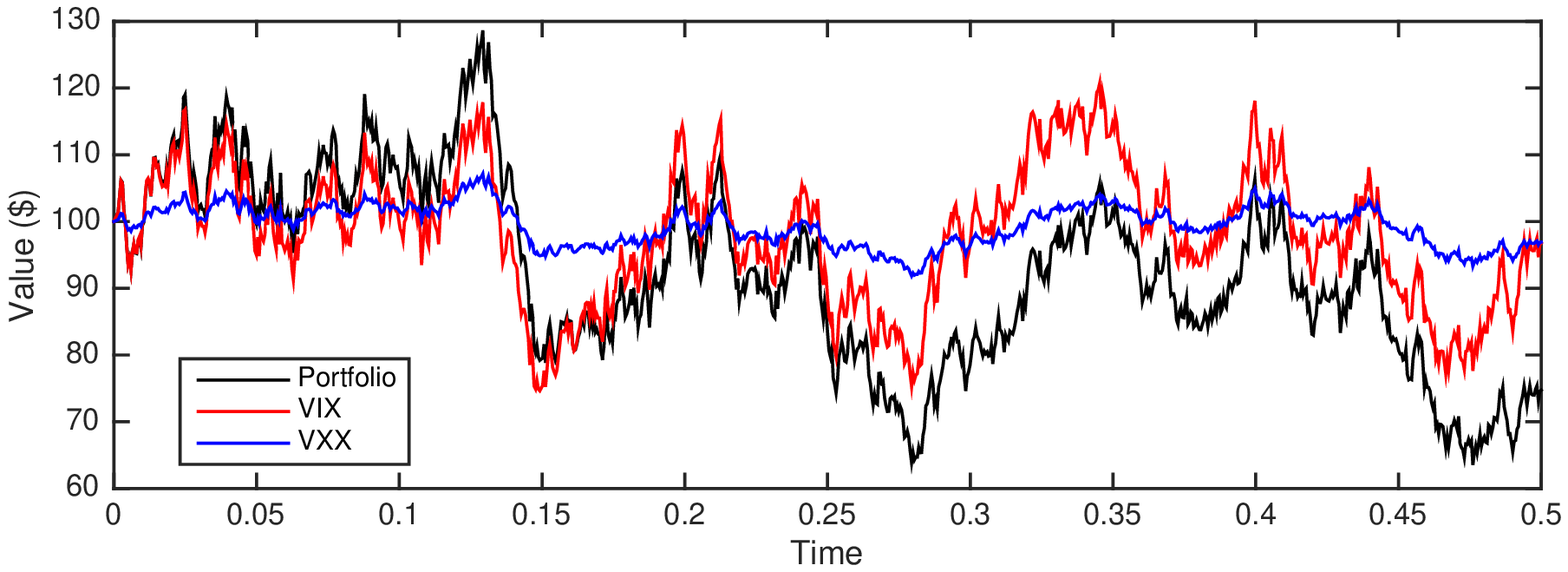}}			\subfigure[Return Scatter Plots]{\includegraphics[trim={1.2cm 0.5cm 1.5cm 0.3cm},clip,width=5in]{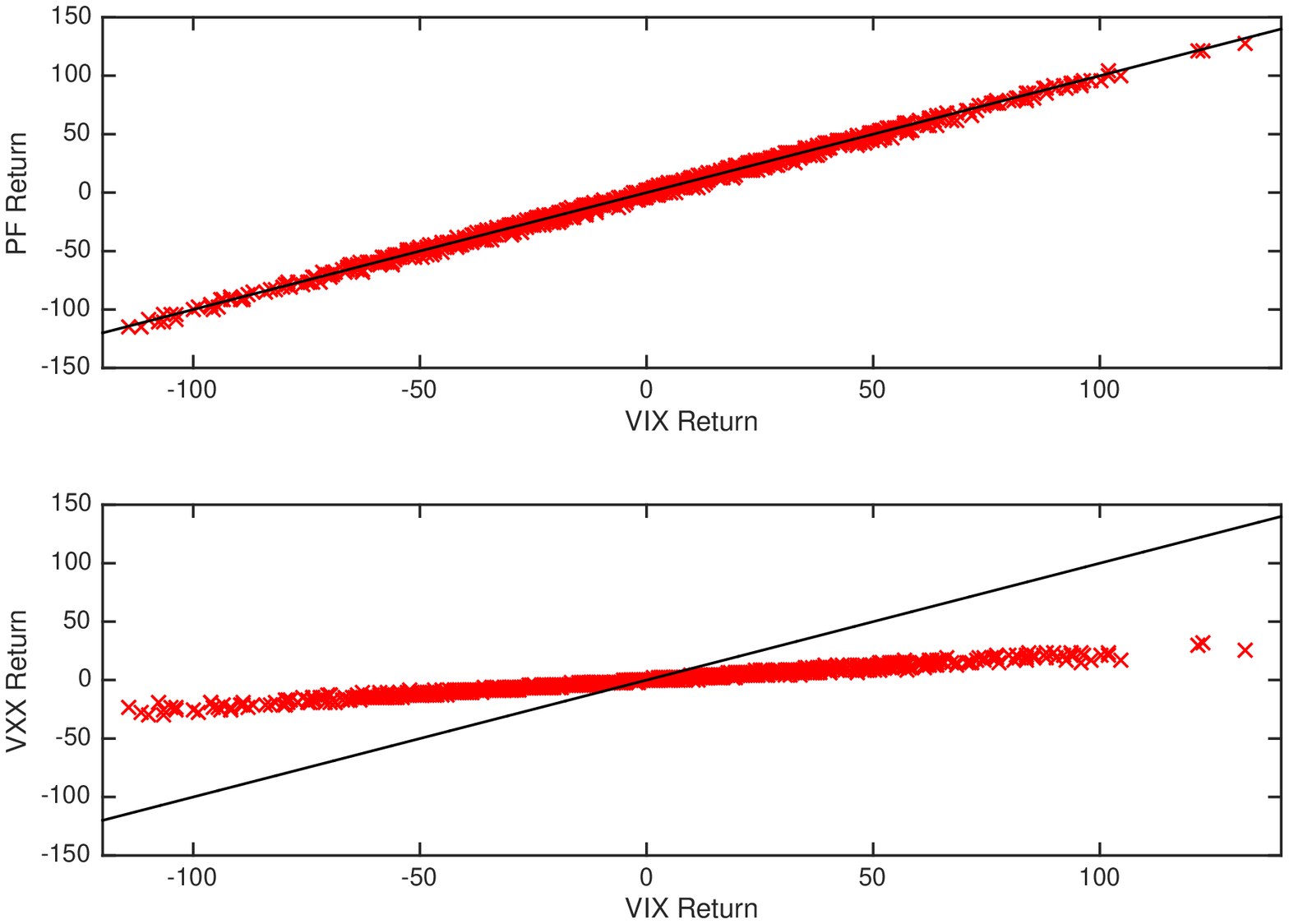}}
			\caption{\small{Sample paths of VIX, VXX,  and the dynamic portfolio with $\beta=1$ over a 6-month period ($T=0.5$). Parameters are $S_0=0.2$, $\widetilde{\kappa}=20$, $\widetilde{\theta}=0.2$, and $\nu=0.4$. Panel (a) displays the time series of the portfolio value, along with the price paths of VIX and VXX, over time starting at  \$$100$. Panel (b) shows  the scatter plots of the annualized  returns  between the portfolio and VIX (top) and between VXX and VIX (bottom). On both plots, a solid straight line with slope 1 is a drawn for visual comparison. }}
			\label{fig:VXX_returns_evolution}
		\end{centering}
	\end{figure}
	
	\clearpage
	Given the VXX strategy (or any other strategy), we can infer from the SDE of the portfolio (see \eqref{Vwtd})  the corresponding  drift (the $dt$ term) and   exposure to $S$  (coefficient of $dS_t/S_t$), which we denote by  $\alpha_t^{(V)}$ and  $\beta_t^{(V)}$, respectively. As a point of reference, if   a portfolio tracks VIX  one-to-one perfectly, then we have $\beta_t^{(V)}= 1$ and $\alpha_t^{(V)}= 0$. As our earlier  discussion following equation \eqref{eq:single_CIR_beta1} indicates, this perfect tracking is impossible, but we still wish to illustrate how much VXX deviates from   VIX in terms of the implied values of $\alpha_t^{(V)}$ and  $\beta_t^{(V)}$.
	
Let us denote   $T_1$ and $T_2$  as the maturities for the    front-month and second-month futures, respectively.  With the portfolio weights in \eqref{wtd}, we find from the portfolio's SDE \eqref{Vwtd}   that
	\begin{equation*}
	\begin{aligned}
		\alpha_t^{(V)}&=r+\beta_t^{(V)}\widetilde{\kappa}\left(1-\frac{\widetilde{\theta}}{S_t}\right),\\
	\beta_t^{(V)}&=\frac{S_t}{f_t^{T_1}}e^{-\widetilde{\kappa}(T_1-t)}-\frac{tS_te^{\widetilde{\kappa}t}}{T_1}\left(\frac{e^{-\widetilde{\kappa}T_1}}{f_t^{T_1}}-\frac{e^{-\widetilde{\kappa}T_2}}{f_t^{T_2}}\right).
	\end{aligned}
	\end{equation*}
	As expected, we do not have $\beta_t^{(V)}= 1$ and $\alpha_t^{(V)}= 0$. Indeed both coefficients are stochastic and depend on the level of $S_t$.	In Figure \ref{fig:VXX_returns_evolution}, we illustrate the sample paths of VIX, VXX, and the dynamic   portfolio of two futures with exposure coefficient $\beta=1$. The last portfolio attempts to  track VIX one-to-one but must be  subjected to a stochastic drift $\alpha_t\neq0$.

		In Figure \ref{fig:VXX_beta_alpha}, we plot the implied $\alpha$ and implied  $\beta$ over time based on the sample path  of VXX in Figure \ref{fig:VXX_returns_evolution}.  The implied  $\beta$  for VXX varies significantly  over time, and is far from the reference value $1$ (for unit exposure to VIX) all the time. On average the implied $\beta$ fluctuates  around $0.23$ over the entire 6 month period. Recall that VXX is a long portfolio of the front-month and second-month futures, starting with a 100\% allocation in the front-month   at each maturity. Interestingly, as VXX allocates more in the second-month futures over time, the portfolio value reaches its maximum approximately halfway through each contract period, suggesting that concentrating on the shortest-term futures does not imply the closest tracking to the underlying index. In Figure \ref{fig:VXX_beta_alpha}(b), we plot the stochastic   $\alpha$ in units of annualized basis points for both VXX and the dynamic portfolio with $\beta = 1$. For both portfolios,  $\alpha$ is not 0 as expected. For the dynamic portfolio with $\beta=1$, the stochastic drift is more volatile than that for VXX, but both are  relatively small compared to the returns seen in Figure \ref{fig:VXX_returns_evolution}(b). 
		
			We conclude this example by discussing the futures trading strategies  corresponding to   $\beta=1$ (see \eqref{utvix}) with \emph{either} the front-month futures only, \emph{or} the second-month futures only. Recall that in this  model  only $d+1=1$ derivative product is necessary to achieve unit exposure ($\beta=1$), but the choice of derivative can lead to a very different portfolio weight over time. To see this, we plot in  Figure \ref{fig:VXX_beta_pf_weights} the  sample paths of the two portfolio weights corresponding to  the two futures contracts with different maturities.   	In general, the futures  strategies tend to decrease exponentially in each maturity cycle,   and become discontinuous at maturities as the portfolio rolls into the new futures contract. The  front-month futures strategy  decays roughly  from  5 to   1  in each cycle. Given  that futures price will converge  to index price   by maturity, it is intuitive that the strategy weight becomes 1 for front-month futures. Intuitively,  futures prices  are typically less volatile than the index, so  leveraging (weights greater than 1) is expected and can in effect  increase the portfolio's volatility to attempt to better track the index. Comparing between the two strategies, using  the second-month futures to track VIX leads to significant leveraging.

			One of the advantages of futures contracts is the ease of leveraging due to margin requirements. If margin requirements are 20\% of notional, then it is possible to achieve $5\times$ leverage. Margin requirements are constantly changing for VIX futures.\footnote{See \url{http://cfe.cboe.com/margins/CurDoc/Default.aspx} for current margin requirements for VIX futures. Margin requirements are stated as a dollar value, rather than a percentage. That dollar value is based on one unit of VIX futures, which is 1,000 times the stated futures price. Back-of-the-envelope calculations yield historical percentages between 10\% and 30\%.} However, $25\times$ to $30\times$ for the second-month futures is neither typical nor    practical. The weights for front-month futures are more in line with feasible leverage that one can attain. Recall that the strategy employed by VXX is a time-deterministic one. Figure  \ref{fig:VXX_beta_pf_weights} further illustrates  that stochastic portfolio weights are necessary in order to achieve unit exposure to the volatility index.

%


	\begin{figure}[H]
		\begin{centering}
			\subfigure[Implied $\beta_t$]{\includegraphics[trim={1.2cm 0.5cm 1.5cm 0.3cm},clip,width=5in]{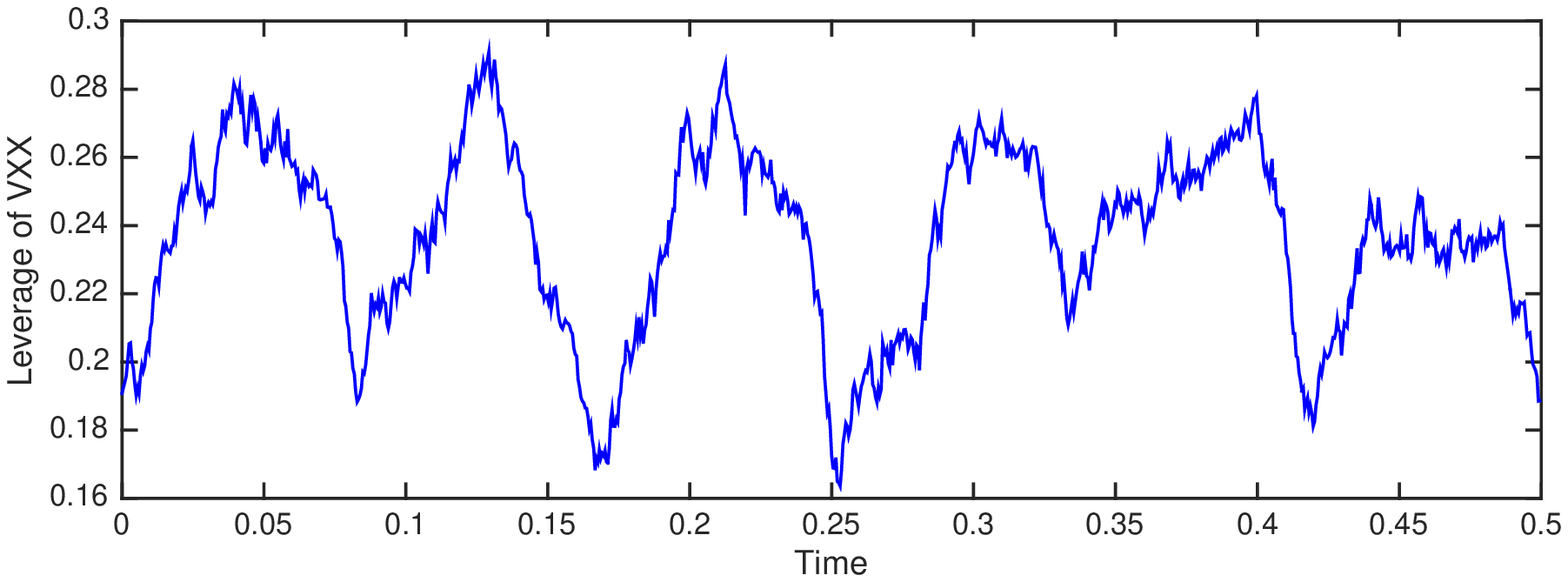}}
			\subfigure[Implied $\alpha_t$]{\includegraphics[trim={1.2cm 0.5cm 1.5cm 0.3cm},clip,width=5in]{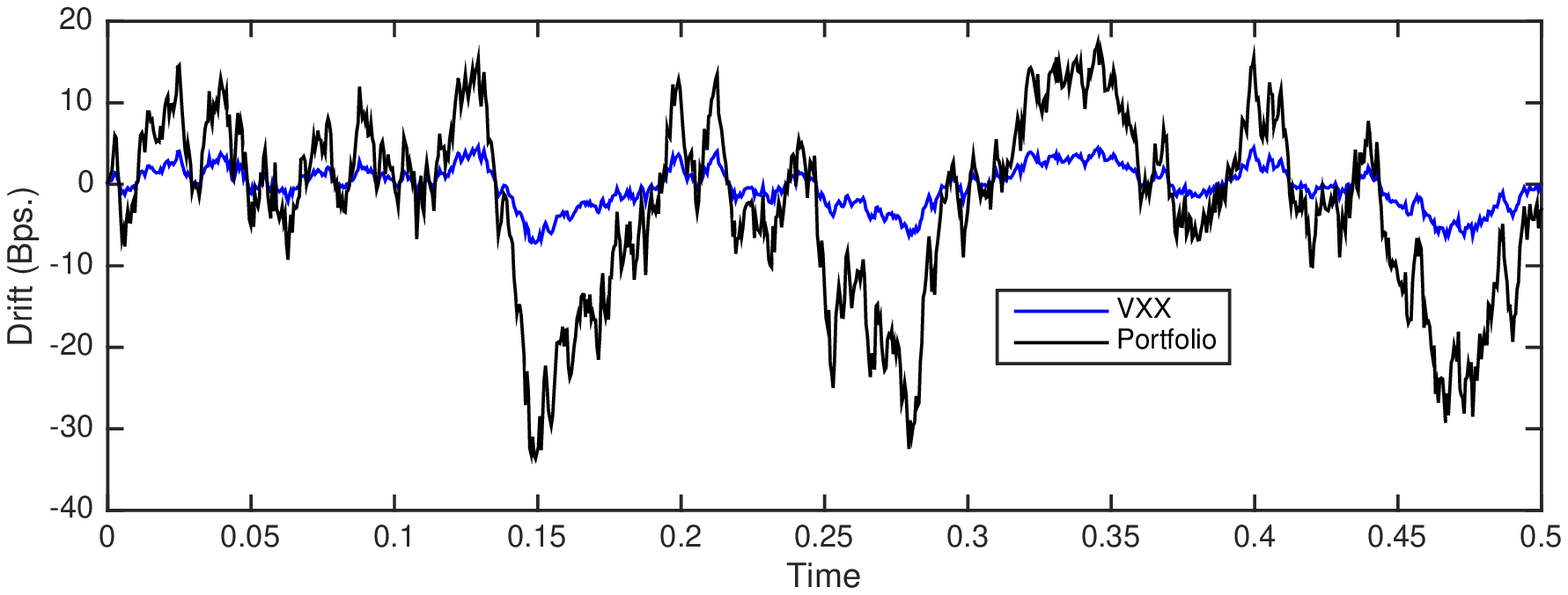}}
			\caption{\small{With reference to the  simulated paths in Figure \ref{fig:VXX_returns_evolution}, we show   (a) the implied exposure coefficient $\beta_t$ inferred from VXX , and (b)   stochastic drift $\alpha_t$ for both VXX and the tracking  portfolio  over a six-month period ($T=0.5$). }}
			\label{fig:VXX_beta_alpha}
		\end{centering}
	\end{figure}

\vspace{20pt}

	\begin{figure}[H]
		\begin{centering}
			\includegraphics[trim={1.2cm 0.5cm 1.5cm 0.3cm},clip,width=5in]{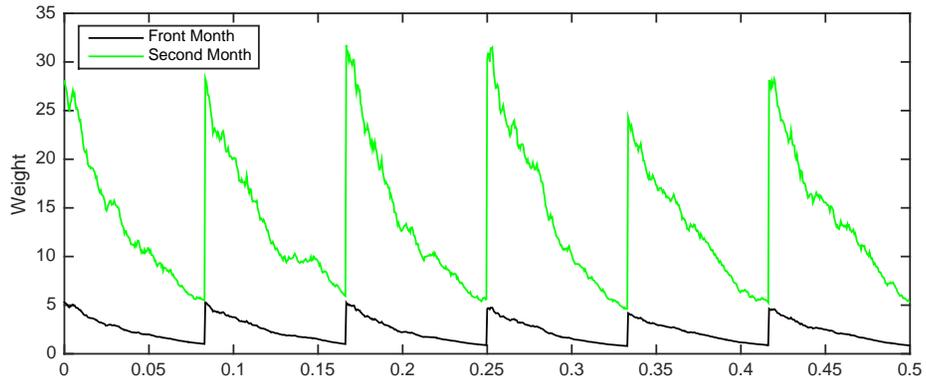}
			\caption{\small{Portfolio  weights  of the  dynamic portfolio with exposure coefficient  $\beta=1$  in Figure \ref{fig:VXX_returns_evolution}.}}
			\label{fig:VXX_beta_pf_weights}
		\end{centering}
	\end{figure}

	  \clearpage
	   
 \subsection{CSQR Model}\label{sec:two_fact_CIR}
We now investigate the tracking strategy and slippage process under an extension of the CIR Model. In this section, $S$ will continue to be mean reverting, but the long-run mean   is also  stochastic. This model is referred to as  the concatenated square root process (CSQR) (see \cite{futuresVIX}):
\begin{equation*}
\begin{aligned}
	dS_t&=\widetilde{\gamma}(Y_t-S_t)dt+\sigma \sqrt{S_t}dB_{t}^{\Q,0},\\
	dY_t&=\widetilde{\kappa}(\widetilde{\theta}-Y_t)dt+\nu\rho\sqrt{Y_t}dB_{t}^{\Q,0}+\nu\sqrt{1-\rho^2}\sqrt{Y_t}dB_{t}^{\Q,1}.
\end{aligned}
\end{equation*}
Here, $B_{t}^{\Q,0}$ and $B_{t}^{\Q,1}$ are independent SBMs.  We assume the parameters are chosen so that the pair $(S_t,Y_t)$ is strictly positive $\forall\,t\ge0$.\footnote{See for example \cite{duffieKan} for a discussion of models of this form.} Here, the index tends to   revert  to the stochastic  level $Y_t$ which is also mean-reverting.  This accounts for  the empirical observations of the path behavior of VIX, and that VIX futures calibration suggest that the long-run mean  oscillates  over time (see e.g.  Figure \ref{fig:termstructure}). 


Let the investor set the values of exposure coefficients $\beta_t \equiv\beta$ and $\eta_t\equiv \eta$, then the portfolio is subject to  a stochastic  drift  (see Proposition \ref{prop:rep_cond}) that is a function of $S_t$ and $Y_t$:
 \begin{equation*}
\begin{aligned}
\alpha_t=r-\frac{\widetilde{\gamma}(Y_t-S_t)}{S_t}\beta -\frac{\widetilde{\kappa}(\widetilde{\theta}-Y_t)}{Y_t}\eta.
\end{aligned}
\end{equation*}
Also, applying Proposition \ref{prop:pf_dynamics}, we obtain the slippage process   $Z_t$ in the CSQR  model:
	\begin{equation*}
	\begin{aligned}
	Z_t=r-\beta\widetilde{\gamma}\left(\frac{Y_t}{S_t}-1\right)-\eta\widetilde{\kappa}\left(\frac{\widetilde{\theta}}{Y_t}-1\right)+\frac{1}{2}\beta(1-\beta)\frac{\sigma^2}{S_t}+\frac{1}{2}\eta(1-\eta)\frac{\nu^2}{Y_t}-\frac{\beta\eta\nu\rho\sigma}{\sqrt{S_tY_t}}.
	\end{aligned}
	\end{equation*}
As we can see, the first two  terms   reflect the mean-reverting properties of $S_t$ and $Y_t$. On average, their effects tend  to be zero,  but are also proportional to the speeds of mean reversion $\widetilde{\gamma}$ and $\widetilde{\kappa}$, and exposure coefficients $\beta$ and $\eta$. The next two terms account for  the   variances of  the index and its stochastic mean, and are negative if $\beta\notin[0,1]$ and $\eta\notin[0,1]$, respectively. The final term reflects the effect of covariance between $S$ and $Y$ in  this model. It will be negative if either (i) all three of $\beta,\eta$ and $\rho$ are positive, or (ii) exactly one of $\beta,\eta$, or $\rho$ is positive.

Since $d=1$, we know that $d+1=2$ derivative products are required to obtain the desired exposures. Let us consider the use of futures contracts on $S$. The   price of  a futures written on $S$  with maturity $T_k$ is given by
\[f_t^{T_k}= \widetilde{\theta}+\left(S_t-\widetilde{\theta}\right)e^{-\widetilde{\gamma}(T_k-t)}+\begin{cases} 
\widetilde{\gamma}e^{-\widetilde{\gamma}T_k}\left(Y_t-\widetilde{\theta}\right)e^{\widetilde{\kappa}t}(T_k-t), & \widetilde{\gamma}=\widetilde{\kappa} \\
\frac{\widetilde{\gamma}\left(Y_t-\widetilde{\theta}\right)}{\widetilde{\gamma}-\widetilde{\kappa}}\left(e^{-\widetilde{\kappa}(T_k-t)}-e^{-\widetilde{\gamma}(T_k-t)}\right), & \widetilde{\gamma}\neq\widetilde{\kappa},
\end{cases}
\] for $t \le T_k$; see \cite{futuresVIX} for a derivation. Notice the first term would be the futures price if $S_t$ were   reverting to the mean $\widetilde{\theta}$ at speed $\widetilde{\gamma}$. Next we calculate the sensitivities  with respect to the index and stochastic mean:
\[\frac{\partial f_t^{T_k}}{\partial S}=e^{-\widetilde{\gamma}(T_k-t)}, \quad \text{ and } \quad \frac{\partial f_t^{T_k}}{\partial Y}=\begin{cases} 
\widetilde{\gamma}e^{\widetilde{\kappa}t-\widetilde{\gamma}{T_k}}({T_k}-t), & \widetilde{\gamma}=\widetilde{\kappa} \\
\frac{\widetilde{\gamma}}{\widetilde{\gamma}-\widetilde{\kappa}}\left(e^{-\widetilde{\kappa}({T_k}-t)}-e^{-\widetilde{\gamma}({T_k}-t)}\right), & \widetilde{\gamma}\neq\widetilde{\kappa} 
\end{cases}.\]
  Henceforth, assume the latter case that $\widetilde{\gamma}\neq\widetilde{\kappa}$.\footnote{Similar proofs can be found in Appendices \ref{App2} and \ref{App3} for $\widetilde{\gamma}=\widetilde{\kappa}$.} In that case, the respective  elasticities are given by
\begin{align}\label{eq:csqr_spotE}
G_t^{(k)}=\frac{S_t}{f_t^{{T_k}}}e^{-\widetilde{\gamma}({T_k}-t)},\quad \text{and} \quad H_t^{(k)}=\frac{Y_t}{f_t^{{T_k}}}\frac{\widetilde{\gamma}}{\widetilde{\gamma}-\widetilde{\kappa}}\left(e^{-\widetilde{\kappa}({T_k}-t)}-e^{-\widetilde{\gamma}({T_k}-t)}\right).
\end{align}
Just as in Section \ref{sec:heston}, the second numerical superscript on $H_t^{(k)}$ has been suppressed since there is only one exogenous factor in this model. 

We consider a portfolio of two  futures contracts, both on $S$ with maturities $T_2>T_1\ge T$. Recall that tracking  does not work with two futures in the Heston Model, so one must include another  type of derivative (e.g. option). Now  in the CSQR model, we  check if this futures portfolio works by verifying 
\begin{equation}\label{eq:csqr_exist}
\begin{aligned}
{G_t^{(1)}H_t^{(2)}\neq G_t^{(2)}H_t^{(1)}},
\end{aligned}
\end{equation}
for all $t\in[0,T]$. Upon plugging in the above elasticities, we find that the condition is equivalent to $T_1\neq T_2$ (see Appendix \ref{App4}.) Since we have assumed that  $T_1<T_2$,   the condition holds. Therefore, the linear system of equations    
\begin{equation}\label{eq:line_sys_csqr}
\left( \begin{array}{c}
\beta\\
\eta\\ \end{array} \right)= \left( \begin{array}{cc}
G_t^{(1)} & G_t^{(2)}\\
H_{t}^{(1)}& H_{t}^{(2)}\\
\end{array}\right)
\left( \begin{array}{c}
u_t^{(1)} \\
u_t^{(2)}\end{array} \right),
\end{equation}
which yields the strategy for the two futures contracts is always solvable. 

Soling the system yield  the portfolio weights
\begin{equation}\label{u1eq}
 u_t^{(1)}=\frac{\beta f_t^{T_1}}{S_t}\left(\frac{ e^{\widetilde{\gamma}(T_2-t)}-e^{\widetilde{\kappa}(T_2-t)}}{e^{\widetilde{\gamma}(T_2-T_1)}-e^{\widetilde{\kappa}(T_2-T_1)}}\right)-\frac{\eta f_t^{T_1}}{Y_t}\left(1-\frac{\widetilde{\kappa}}{\widetilde{\gamma}}\right)\left(\frac{ e^{\widetilde{\kappa}(T_2-t)} }{e^{\widetilde{\gamma}(T_2-T_1)}-e^{\widetilde{\kappa}(T_2-T_1)}}\right),
\end{equation}
and
\begin{equation}\label{u2eq}
u_t^{(2)}=\frac{-\beta f_t^{T_2}}{S_t}\left(\frac{e^{\widetilde{\gamma}(T_1-t)}-e^{\widetilde{\kappa}(T_1-t)}}{e^{-\widetilde{\kappa}(T_2-T_1)}-e^{-\widetilde{\gamma}(T_2-T_1)}}\right)+\frac{\eta f_t^{T_2}}{Y_t}\left(1-\frac{\widetilde{\kappa}}{\widetilde{\gamma}}\right)\left(\frac{ e^{\widetilde{\kappa}(T_1-t)} }{e^{-\widetilde{\kappa}(T_2-T_1)}-e^{-\widetilde{\gamma}(T_2-T_1)}}\right).
\end{equation}
For example, take $\beta=1$ and $\eta=0$. Then, only the first terms in \eqref{u1eq} and \eqref{u2eq} remain, and $u_t^{(1)}$ is positive but $u_t^{(2)}$ is negative. This combination of futures contracts allows for direct exposure to the volatility index, without any exposure to the stochastic mean of the volatility.

\section{Concluding Remarks}\label{sec:conclusion}
We have studied a class of strategies  for  tracking an  index  and generating  desired exposure to a given  set of    factors in a general continuous-time diffusion framework.  Our analytical results provide the   tracking condition  and derivatives  trading strategies  under any given model. We have illustrated the path-behaviors of the trading strategies and the associated  tracking performances  under a number of well-known models, such as the Black-Scholes and  Heston models for equity tracking, and the CIR and CSQR models for VIX tracking.  This has practical implications to investors or fund managers  who seek to construct portfolios of derivatives for tracking purposes. Our results also shed light on  how tracking errors can arise  in a general financial market. 

There are many natural and useful extensions to this research. Our  illustrative  examples include equity indices and volatility indices, but there are many more asset classes, including    commodities, fixed income, and currencies. In all these asset classes, many  ETFs are designed to   track the same or similar indices. There is potential to  apply and extend our  methodology   to analyze  the price  dynamics and tracking performances of various futures/derivatives-based ETFs, and  derivatives portfolios in general. With options written on leveraged ETFs, investors can now use derivatives to generate   leveraged exposure. This calls for  consistent pricing of LETF options across leverage ratios (see \cite{Leung2014,Leung2014a}).  Moreover, our work shows that in some models it is impossible to  perfectly control exposure to the index and all factors. This should motivate the design of new trading strategies that minimize deviations from a  targeted exposure.  The insight on tracking errors  can     in principle be exploited for   statistical arbitrage, but it can also help  investors and regulators   better understand the risks associated with derivatives portfolios.

\newpage
\appendix

\section{Proofs}

\subsection{Derivation of  SDE \eqref{ckreturnSDE}}\label{App1}
By Ito's formula, the option price satisfies the SDE
\begin{equation}
\begin{aligned}
dc_t^{(k)}=\frac{\partial c^{(k)}}{\partial t}dt+dM_t^{\intercal}\nabla c^{(k)} +\frac{1}{2}dM_t^{\intercal} \nabla^2 c^{(k)}dM_t. \label{cSDE}
\end{aligned}
\end{equation}
Here, the gradient and Hessian are taken with respect to all market variables, $(S,Y^{(1)},...,Y^{(d)})$. We rewrite the last term in \eqref{cSDE} as follows:
 \begin{align}
dM_t^\intercal \nabla^2 c^{(k)}dM_t&=(\widetilde{\gamma}_t dt + \Sigma_t dB_t^\Q)^\intercal \nabla^2 c^{(k)}(\widetilde{\gamma}_t dt + \Sigma_t dB_t^\Q)\notag\\
&= \left(dB_t^\Q\right)^\intercal\Sigma_t^\intercal \nabla^2 c^{(k)}\Sigma_t dB_t^\Q\notag\\
&=\Tr\left[\left(dB_t^\Q\right)^\intercal\Sigma_t^\intercal \nabla^2 c^{(k)}\Sigma_t dB_t^\Q\right]\notag\\
&=\Tr\left[\Sigma_t^\intercal \nabla^2 c^{(k)}\Sigma_t dB_t^\Q\left(dB_t^\Q\right)^\intercal\right]\notag\\
&=\Tr\left[\Sigma_t^\intercal \nabla^2 c^{(k)}\Sigma_t \textbf{I}dt\right]=\Tr\left[\Sigma_t^\intercal \nabla^2 c^{(k)}\Sigma_t \right]dt.\label{lastterm}
\end{align}
 
Substituting \eqref{lastterm}  into \eqref{cSDE}, we obtain 
\begin{equation}\label{eq:c_ito_simple}
\begin{aligned}
dc_t^{(k)}=\left(\frac{\partial c^{(k)}}{\partial t}+\frac{1}{2}\Tr\left[\Sigma_t^\intercal \nabla^2 c^{(k)}\Sigma_t \right]\right)dt+\frac{\partial c^{(k)}}{\partial S}dS_t+\frac{\partial c^{(k)}}{\partial Y^{(1)}}dY_t^{(1)}+...+\frac{\partial c^{(k)}}{\partial Y^{(d)}}dY_t^{(d)}.
\end{aligned}
\end{equation}
On the other hand, the Feynman-Kac formula tells us that the derivative price satisfies
\begin{equation}
\begin{aligned}
\frac{\partial c^{(k)}}{\partial t}+\widetilde{\gamma}_t^{(0)}\frac{\partial c^{(k)}}{\partial S}+\widetilde{\gamma}_t^{(1)}\frac{\partial c^{(k)}}{\partial Y^{(1)}}+...+\widetilde{\gamma}_t^{(d)}\frac{\partial c^{(k)}}{\partial Y^{(d)}}+\frac{1}{2}\Tr\left[\Sigma_t^\intercal \nabla^2 c^{(k)}\Sigma_t \right]=rc^{(k)},
\end{aligned}
\end{equation}
with the terminal condition: $c^{(k)}(T_k,s,y^{(1)},...,y^{(d)})=h^{(k)}(s,y_1,...,y_d)$ for all vectors $(s,y_1,...,y_d)$ with strictly positive components. Using  \eqref{eq:c_ito_simple}, we have 
\begin{equation*}
\begin{aligned}
dc_t^{(k)}&=\left(rc_t^{(k)}-\widetilde{\gamma}_t^{(0)}\frac{\partial c^{(k)}}{\partial S}-\widetilde{\gamma}_t^{(1)}\frac{\partial c^{(k)}}{\partial Y^{(1)}}-...-\widetilde{\gamma}_t^{(d)}\frac{\partial c^{(k)}}{\partial Y^{(d)}}\right)dt\\
 &\quad\quad+\frac{\partial c^{(k)}}{\partial S}dS_t+\frac{\partial c^{(k)}}{\partial Y^{(1)}}dY_t^{(1)}+...+\frac{\partial c^{(k)}}{\partial Y^{(d)}}dY_t^{(d)}.\\
\end{aligned}
\end{equation*}
Dividing both sides by $c_t^{(k)}$ and using the definitions listed in   \eqref{eq:nonlin_elastDef}, we obtain  SDE \eqref{ckreturnSDE}.

\subsection{Validation of \eqref{eq:csqr_exist} when $\widetilde{\kappa}=\widetilde{\gamma}$}\label{App2}
We now demonstrate that, in the special case of  $\widetilde{\kappa}=\widetilde{\gamma}$ under  the CSQR model,   tracking strategies exist using  two  futures on $S$ of different maturities. The elasticity with respect to the index is the same as in \eqref{eq:csqr_spotE}; however, the elasticity with respect to the stochastic mean is
\begin{equation}\label{eq:app_meanE}
\begin{aligned}
H_t^{(k)}=\frac{Y_t}{f_t^{T_k}}\frac{\partial f_t^{T_k}}{\partial Y}=\frac{Y_t}{f_t^{T_k}}\widetilde{\gamma}e^{\widetilde{\kappa}t-\widetilde{\gamma}T_k}(T_k-t).
\end{aligned}
\end{equation}
Substituting \eqref{eq:app_meanE} into   \eqref{eq:csqr_exist}, we have
\begin{equation*}
\begin{aligned}
&\frac{S_t}{f_t^{T_1}}e^{-\widetilde{\gamma}(T_1-t)}\frac{Y_t}{f_t^{T_2}}\widetilde{\gamma}e^{\widetilde{\kappa}t-\widetilde{\gamma}T_2}(T_2-t)\neq\frac{S_t}{f_t^{T_2}}e^{-\widetilde{\gamma}(T_2-t)}\frac{Y_t}{f_t^{T_1}}\widetilde{\gamma}e^{\widetilde{\kappa}t-\widetilde{\gamma}T_1}(T_1-t)\\
\iff&e^{-\widetilde{\gamma}(T_1-t)}e^{\widetilde{\kappa}t-\widetilde{\gamma}T_2}(T_2-t)\neq e^{-\widetilde{\gamma}(T_2-t)}e^{\widetilde{\kappa}t-\widetilde{\gamma}T_1}(T_1-t) \iff T_1\neq T_2.\\
\end{aligned}
\end{equation*}
Under the assumption that $T_1<T_2$, the resulting system is solvable and yields   strategies that achieve the given exposures.

\subsection{Solution to \eqref{eq:line_sys_csqr} when $\widetilde{\kappa}=\widetilde{\gamma}$}\label{App3}
With the  elasticity with respect to the stochastic mean   given by   \eqref{eq:app_meanE} and $\widetilde{\kappa}=\widetilde{\gamma}$ under the CSQR model, the portfolio weights that solve system \eqref{eq:line_sys_csqr} are 
\begin{equation*}
\begin{aligned}
u_t^{(1)}=\frac{\beta f_t^{T_1}}{S_t}\left(\frac{ e^{\widetilde{\gamma}(T_1-t)}(T_2-t)}{T_2-T_1}\right)-\frac{\eta f_t^{T_1}}{Y_t}\left(\frac{e^{\widetilde{\gamma}(T_1-t)}}{\widetilde{\gamma}(T_2-T_1)}\right),\\
\end{aligned}
\end{equation*}
and
\begin{equation*}
\begin{aligned}
u_t^{(2)}=\frac{-\beta f_t^{T_2}}{S_t}\left(\frac{e^{\widetilde{\gamma}(T_2-t)}(T_1-t)}{T_2-T_1}\right)+\frac{\eta f_t^{T_2}}{Y_t}\left(\frac{e^{\widetilde{\gamma}(T_2-t)}}{\widetilde{\gamma}(T_2-T_1)}\right).\\
\end{aligned}
\end{equation*}
In particular,  take $\beta=1$ and $\eta=0$. Then, the portfolio weights simplify to 
\begin{equation*}\
\begin{aligned}
u_t^{(1)}=\frac{ f_t^{T_1}}{S_t}\left(\frac{ e^{\widetilde{\gamma}(T_1-t)}(T_2-t)}{T_2-T_1}\right), \quad \text{and} \quad
u_t^{(2)}=\frac{- f_t^{T_2}}{S_t}\left(\frac{e^{\widetilde{\gamma}(T_2-t)}(T_1-t)}{T_2-T_1}\right).\\
\end{aligned}
\end{equation*}
By inspection,  $u_t^{(1)}$ is positive  while $u_t^{(2)}$ is negative.

\subsection{Validation of \eqref{eq:csqr_exist} when $\widetilde{\kappa}\neq \widetilde{\gamma}$}\label{App4}
We substitute the elasticities from   \eqref{eq:csqr_spotE} into   \eqref{eq:csqr_exist}, and then check directly whether or not the following holds:
\begin{equation*}\label{eq:CIR_strat_det}
\begin{aligned}
&\frac{S_t}{f_t^{T_1}}e^{-\widetilde{\gamma}(T_1-t)}\frac{Y_t}{f_t^{T_2}}\frac{\widetilde{\gamma}}{\widetilde{\gamma}-\widetilde{\kappa}}\left(e^{-\widetilde{\kappa}(T_2-t)}-e^{-\widetilde{\gamma}(T_2-t)}\right)\neq\frac{S_t}{f_t^{T_2}}e^{-\widetilde{\gamma}(T_2-t)}\frac{Y_t}{f_t^{T_1}}\frac{\widetilde{\gamma}}{\widetilde{\gamma}-\widetilde{\kappa}}\left(e^{-\widetilde{\kappa}(T_1-t)}-e^{-\widetilde{\gamma}(T_1-t)}\right)\\
& \qquad\qquad \iff e^{-\widetilde{\gamma}T_1}\left(e^{-\widetilde{\kappa}(T_2-t)}-e^{-\widetilde{\gamma}(T_2-t)}\right)\neq e^{-\widetilde{\gamma}T_2}\left(e^{-\widetilde{\kappa}(T_1-t)}-e^{-\widetilde{\gamma}(T_1-t)}\right)
\iff T_1\neq T_2.
\end{aligned}
\end{equation*}
Since we assume $T_1<T_2$, the resulting system is solvable and allows for strategies to attain the required  exposures.

\begin{small} 
	\bibliographystyle{apa}
	\bibliography{mybib2_10222015}
\end{small}

\end{document}